\newcommand{\thickhline}{\noalign{\hrule height 1pt}}
\title{On Design of Opportunistic Spectrum Access in the Presence of Reactive
Primary Users}
\author{Yue Ling Che,~\IEEEmembership{Student Member,~IEEE}, Rui Zhang,~\IEEEmembership{Member,~IEEE}, and Yi Gong,~\IEEEmembership{Senior Member,~IEEE}
\thanks{Y.~L.~Che and Y.~Gong are with the School of Electrical
and Electronic Engineering, Nanyang Technological University,
Singapore (e-mail: chey0004@ntu.edu.sg; eygong@ntu.edu.sg).}
\thanks{R.~Zhang is with the Department of Electrical and Computer Engineering, National
University of Singapore (e-mail: elezhang@nus.edu.sg). He is also with the Institute for Infocomm
Research, A*STAR, Singapore.}
}
\begin{document}
\maketitle \thispagestyle{empty}

\begin{abstract}
Opportunistic spectrum access (OSA) is a key technique enabling the secondary users (SUs) in a cognitive radio (CR) network to transmit over the ``spectrum holes'' unoccupied by the primary users (PUs). In this paper, we focus on the OSA design in the presence of reactive PUs, where PU's access probability in a given channel is related to SU's past access decisions. We model the channel occupancy of the reactive PU as a 4-state discrete-time Markov chain. We formulate the optimal OSA design for SU throughput
maximization as a constrained finite-horizon partially observable Markov decision process (POMDP)  problem. We solve this problem by first considering the conventional short-term conditional collision probability (SCCP) constraint. 
We then adopt a long-term PU throughput (LPUT) constraint to effectively protect the reactive PU transmission.
We derive the structure of the optimal OSA policy under the LPUT constraint and propose a suboptimal policy with  lower complexity. Numerical results are provided to validate the proposed studies, which reveal some interesting new tradeoffs between SU throughput maximization and PU transmission protection in a practical interaction scenario.
\end{abstract}

\begin{keywords}
Opportunistic spectrum access, reactive primary user, cognitive radio, partially observable
Markov decision process (POMDP), dynamic programming.
\end{keywords}

\newtheorem{definition}{\underline{Definition}}[section]
\newtheorem{fact}{Fact}
\newtheorem{assumption}{Assumption}
\newtheorem{theorem}{\underline{Theorem}}[section]
\newtheorem{lemma}{\underline{Lemma}}[section]
\newtheorem{corollary}{\underline{Corollary}}[section]
\newtheorem{proposition}{\underline{Proposition}}[section]
\newtheorem{example}{\underline{Example}}[section]
\newtheorem{remark}{\underline{Remark}}[section]
\newcommand{\mv}[1]{\mbox{\boldmath{$ #1 $}}}
\newtheorem{property}{\underline{Property}}[section]

\section{Introduction}
By enabling the secondary users (SUs) to access the unoccupied channels of  the primary users (PUs) in a cognitive radio (CR) network,  opportunistic spectrum access (OSA) is regarded as one promising solution to resolving the spectrum scarcity versus spectrum underutilization paradox in wireless communications \cite{Q.ZhaoDSA}-\cite{Mohanty.radio.06}.
To design  optimal OSA strategies, two competing goals are addressed at the same time:
the  ``spectrum holes'' unused by the PUs  should be optimally explored by the SUs to maximize their throughput, whereas the probability of the SU's transmission collision with undetected active PUs should be minimized.
In this paper, we study the OSA design for SUs in the presence of reactive PUs and aim at achieving the optimal tradeoffs between SU  throughput maximization and  PU  collision minimization.

\subsection{Related Work}
A great deal of valuable prior work has investigated the OSA design for  CR networks.
Assuming that SU is only able to sense a certain part of the spectrum at each time due to hardware limitations, the authors in \cite{Q.ZhaoPOMDP} proposed a partially observable Markov decision process (POMDP) framework to design the optimal OSA.
However, due to ``the curse of dimensionality'', POMDP problems are in general computationally prohibitive to solve \cite{POMDP,POMDP.Survey}.
Zhao~\textit{et~al.} in \cite{Q.ZhaoTWC2008} formulated the design of SU's optimal sensing policy as a POMDP and proposed a myopic sensing policy, which maximizes SU's average reward over a finite horizon.
In \cite{Y.ChenTSP2009}, Chen \textit{et al.} proposed a  threshold-based optimal spectrum sensing and accessing policy, which maximizes SU's throughput during its battery lifetime. Both \cite{Q.ZhaoTWC2008} and \cite{Y.ChenTSP2009} assumed that sensing errors are negligible. In \cite{Y.ChenIT08}, Chen \textit{et al.} considered OSA design in the presence of sensing errors and proposed a short-term conditional collision probability (SCCP) constraint for protecting PUs from SU's collisions in a time-slotted primary system. Moreover, \cite{Y.ChenIT08} proposed a separation principle to significantly reduce the complexity of solving the constrained finite horizon POMDP problem, which  maximizes SU's  throughput subject to the SCCP constraint.
Since the SCCP constraint is able to provide  effective protection to PUs' transmission \cite{Y.ChenIT08}, it has been widely adopted in subsequent studies on OSA. For example, a similar POMDP problem subject to the SCCP constraint was considered in \cite{P.TCISS09} for unslotted PU systems.
An online OSA algorithm by learning PU's  signal statistics was proposed in \cite{UJ.SP10} under the SCCP constraint. Li \textit{et al.} in \cite{Tong.JSAC.2011} showed that when the SCCP constraints over time are tight, the optimal OSA policy can be
implemented as a simple memoryless policy with periodic channel sensing.

Most existing work on OSA with time-slotted PUs, including the aforementioned one, has assumed  a \emph{non-reactive} PU model, where PU's transmission over a particular channel evolves as a 2-state on/off Markov chain with fixed state transition probabilities. Similar assumptions can also be found in the experimental based work on OSA with unslotted PUs, such as  \cite{Wang.Dyspan.05} and \cite{Sadler.CM.07}.
Although greatly simplifying the OSA design, the non-reactive PU model might not be practical since existing wireless systems are mostly intelligent enough to adapt their transmissions upon experiencing collision or interference.
For example, a PU may increase transmit power to compensate the link loss due to the received interference. Alternatively, it may reduce the channel access probability when collision occurs in a carrier sensing multiple access (CSMA) based primary system. In this paper, we refer to such PUs as \emph{reactive} PUs, to differentiate from their non-reactive counterparts.

In this paper, we focus on designing SU's optimal OSA policy in the presence of time-slotted reactive PUs.
It is worth noting that there has been recent work that addressed reactive PUs for OSA and/or spectrum sharing (SS) based CR networks.
In contrast to OSA, with SS, SU is allowed to transmit regardless of the PU's on/off status, provided that the resulting interference to PU is kept below a predefined threshold. 
In \cite{R.ZhangDySpan08}, the authors proposed a \emph{hidden power-feedback loop} for the CR: If PU is reactive and reacts upon receiving SU's interference, SU will receive a power-boosted PU signal that is easier to detect.
Following \cite{R.ZhangDySpan08},  \cite{G.Zhao.TWC.09} proposed a proactive sensing scheme and a sequential transmit power adaptation strategy to exploit spectrum opportunities in the SS based CR.
In \cite{R.ZhangGlobecom09}, the author  extended the work in \cite{R.ZhangDySpan08} and designed active learning and supervised transmission schemes.
Automatic retransmission request (ARQ) based  reactive PUs have been considered in, e.g.,  \cite{Levorato.Allerton.12} and \cite{Levorato.Globecom.10}, for  the SS based CR. Under the assumption that SU has full knowledge of PU's buffer state and  ARQ state, the authors in \cite{Levorato.Allerton.12} adopted a Markov process based model to determine SU's optimal transmission policy over an infinite horizon, which  maximizes SU's long-term average throughput subject to PU's long-term throughput loss. As shown in \cite{Levorato.Allerton.12}, the SU's optimal transmission policy is stationary and thus can be obtained by solving a linear program.
Online algorithms have also been proposed in \cite{Levorato.Globecom.10} for the cases where only partial and/or noisy observations of PU's buffer state and ARQ state are available to the SU.
Compared with the existing work for SS based CR, the work considering reactive PUs for OSA based CR is very limited.
It is noted that a CSMA-based reactive PU model has  been proposed in \cite{R.Chen} to investigate the performance of different SU access policies; however, \cite{R.Chen} did not address the optimal OSA design.

\subsection{Main Results}
In this paper, we focus on the effects of SUs' channel access actions on the reactive PUs' transmission quality. Since the existence of the secondary network is usually oblivious to PUs, we assume that PUs only implement conventional techniques, such as energy detection, to detect the existence of interference/collision; thus, PUs are not able to differentiate the received interference/collision from other PUs and that from SUs. In addition, there might be other unexpected co-channel interference and noise at the primary receiver, which can also evoke reactions of PUs. It is assumed that the reactive PUs treat all the received interference/collision in the same way and react to it accordingly.

We consider an OSA-based CR network, in which one SU transmits opportunistically over $N$ orthogonal frequency bands, each of which is assigned to one PU.
In each time-slot, the SU  selects one channel to sense by choosing a spectrum sensor operating point, and then determines whether to access the selected channel based on the sensing result.
To maximize the SU's throughput subject to PUs' transmission protection, we formulate the OSA design problem as a constrained POMDP problem.
The main results of this paper are summarized as follows.
\begin{itemize}
 \item We propose a new 4-state discrete-time Markov chain model to describe the channel occupancy state  of each reactive PU, which includes the conventional 2-state on/off model for the non-reactive PU as
       a special case.
       The expanded state space and  state transition probabilities in the new model are used to specify the reactions of PU subject to SU's transmit collision.
\item  By adopting the conventional SCCP constraint to protect PU's transmission as in \cite{Y.ChenIT08}, we study the optimal OSA policy under the proposed reactive PU model.
       We extend the \emph{separation principle} proposed in \cite{Y.ChenIT08} for the non-reactive PU case to the reactive PU case, and obtain the optimal OSA policy that can be implemented efficiently.
       However, unlike the non-reactive PU case,  we show that the reactive PU's throughput in general cannot be guaranteed under the SCCP constraint.

\item  To effectively protect the reactive PU's transmission, we adopt a \emph{long-term PU throughput}  (LPUT) constraint, similar to the one proposed for the SS based CR in \cite{R.ZhangGlobecom08}.        Under this constraint,  we first  study  the OSA design for PU's \emph{worst case} transmission with $N=1$, i.e., there is only one pair of PU and SU sharing a single channel.  We obtain the optimal OSA policy structure in this case, which reveals that the spectrum sensor design plays a crucial role in effectively protecting PU's transmission. Noticing the high complexity in designing an effective spectrum sensor due to the non-deterministic belief state transitions of POMDP, we thus  convert the POMDP into an equivalent Markov decision process (MDP) with deterministic state transitions. By studying the reformulated MDP-based LPUT constraint, we propose a suboptimal OSA policy with lower implementation  complexity, which is shown to guarantee the reactive PU's throughput.  Based on the separation principle,
 we then extend the suboptimal policy for the case of $N=1$ to the general case of $N>1$ and show that  the reactive PU's throughput on each channel is  guaranteed by the proposed suboptimal policy.
\end{itemize}

\subsection{Organization}
The rest of this paper is organized as follows. Section II presents the  channel occupancy model for reactive PUs in a CR network.
Section III formulates  the OSA design under the reactive PU model as a constrained POMDP problem.
Section IV studies the POMDP problem under the conventional SCCP constraint and develops the optimal OSA policy based on the separation principle.
Section V studies the POMDP problem under the proposed LPUT constraint and proposes a suboptimal policy.
Section VI  compares numerical examples on
the performance of the proposed optimal and suboptimal policies.
Finally, Section VII concludes the paper.

\section{System Model}
We consider a CR network consisting of one SU and $N$ PUs.
Each PU is preassigned a dedicated channel and the traffic carried by each channel is assumed to be \emph{independent} from each other.
We assume synchronized time-slotted transmission for all the PUs and SU as in \cite{Q.ZhaoPOMDP}, \cite{Q.ZhaoTWC2008}, \cite{Y.ChenTSP2009}, \cite{Y.ChenIT08}, and \cite{UJ.SP10}.
In the following, we model the channel occupancy state of the reactive PU and describe the corresponding OSA decisions of the SU.

\subsection{Channel Occupancy Model for Reactive PU}
Fig.~1 shows a typical example of the channel occupancy  model for the conventional non-reactive PU, which has been adopted in prior work, e.g., \cite{Q.ZhaoDSA}, \cite{Q.ZhaoPOMDP}, \cite{Q.ZhaoTWC2008}-\cite{Tong.JSAC.2011}.
In this model, the primary traffic over a given channel is approximated by a two-state discrete-time Markov chain
with states `0' and `1' denoting whether the channel is busy or idle, respectively.
The PU's state changes slot by slot according to transition
probabilities $\alpha_0$ and $\beta_0$ shown in Fig.~1.
Clearly, this model is not able to reflect the state of reactive PUs, for which the state transition depends on the SU's past access decisions.
For example, the reactive PU usually reduces its channel access probability if a collision occurs
and increases such probability when the environment becomes friendly again (no
collision is observed).
To reflect PU's reactive behaviors in practice, we propose an enhanced
channel occupancy model.

\begin{figure}[t]
\centering
\DeclareGraphicsExtensions{.eps,.mps,.pdf,.jpg,.png}
\DeclareGraphicsRule{*}{eps}{*}{}
\includegraphics[angle=0, width=0.25\textwidth]{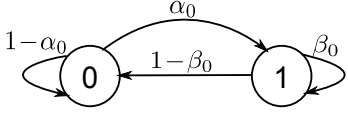}
\caption{Channel occupancy model for the non-reactive PU.}
\label{fig:NR PU}
\end{figure}

\begin{figure}[t]
\centering
\DeclareGraphicsExtensions{.eps,.mps,.pdf,.jpg,.png}
\DeclareGraphicsRule{*}{eps}{*}{}
\includegraphics[angle=0, width=0.4\textwidth]{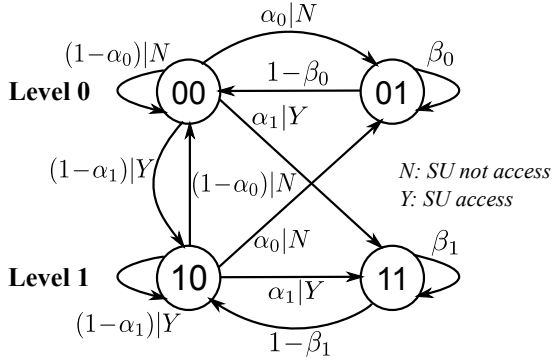}
\caption{Channel occupancy model for the reactive PU.}
\label{fig:RPU}
\end{figure}
The new model is composed of two levels, namely, Level 0 and Level 1.
The reactive PU is assumed to have a higher probability to access the channel when it is in Level 0 than in Level 1.  As a result, the
enhanced channel occupancy model becomes a four-state Markov chain  as shown in Fig.~2, with
each level having two states (busy or idle).
For convenience, we use  2 bits to
represent the total 4 states for the reactive PU. The first bit denotes the level and the
second bit denotes whether the channel is busy (`0') or idle (`1').

In each time-slot, the state of a reactive PU evolves according to the state and the SU's action in the previous slot.
Suppose that initially the PU is in Level 0 with transition probabilities $\alpha_0$ and $\beta_0$, as shown in Fig.~1.
If there is no SU accessing the channel, or the SU accesses when the PU is at state `01', no
collision occurs and the PU will  stay in  Level 0.
However, if the SU accesses  when the PU's state is `00', the PU will react to the resulted collision by transiting from Level 0 to Level 1, with probability $\alpha_1$ to state `11' and probability $1-\alpha_1$ to state `10'.
We assume $\alpha_1 \geq \alpha_0$ to reflect the reduced probability that the reactive PU accesses the channel in Level 1 than in Level 0.
When the PU transits to Level 1, it will stay in this level if the SU continues to access the channel.
However, if the SU does not access the channel and the PU is at state `10', the PU  observes no collision and thus conceives that the environment has become friendly for its transmission.
As a result, the PU increases its probability to access the channel by returning to Level 0, with transition probabilities $\alpha_0$ to state `01' and $1-\alpha_0$ to state `00', respectively.
Since the reactive PU's state transitions are related to the SU's actions at state `00' or `10', the
corresponding transition probabilities are conditioned on the SU's action as shown in Fig.~2.
Moreover, notice that when the PU's state is `01' or `11', the state transition probabilities are not affected by the SU's actions.
This is because no collision occurs if the PU does not attempt to transmit.
We assume $\beta_1 \geq \beta_0$ to be consistent with the reactive PU's more willingness to access in Level 0 than in Level 1.
Note that when $\alpha_1=\alpha_0$ and $\beta_1=\beta_0$, the proposed 4-state channel occupancy model for the reactive PU  reduces to the conventional 2-state  counterpart in Fig.~1 for the non-reactive PU.

It is worth pointing out that our proposed two-level Markov chain model is a basic model that  captures  essential reactions of the PU subject to the SU's collisions; and therefore it can be generalized to specify more complicated reactions of the PU (e.g., random transmission backoff in CSMA)  by appropriately setting the transition probabilities in each level and/or increasing the number of levels in the model.

\subsection{SU's OSA} \label{sec:Net_Model} \label{subsection: SU_OSA}
We assume that the SU can only select one channel for sensing in each time-slot due to hardware limitations, and the sensing result over the selected channel may not be the PU's actual state due to sensing errors.
Similar to \cite{Y.ChenIT08}, the SU makes a sequence of decisions in each slot as follows. At the beginning of slot $t$, $t\geq1$, the SU transmitter selects a channel $a(t) \in \mathbb{A}_{S}$ to sense, where
$\mathbb{A}_{S}=\{1,2,\dots,N\}$ denotes the set of channels.
Supposing  $a(t)=a$, the SU then decides the sensor operating point to sense channel $a$, which is determined by the probability of false alarm (PFA) $\epsilon_{a}(t) \in [0,1]$ and
the probability of mis-detection (PM) $\delta_{a}(t) \in [0,1]$.
A feasible operating point must be confined by the optimal receiver operating characteristic (ROC) curve\footnote{Given the maximum allowable PFA $\epsilon$, the smallest  achievable probability of mis-detection, denoted by $\delta^{*}$, can be attained by the optimal Neyman-Pearson detector \cite{SP}. By varying $\epsilon$ over $[0,1]$, the resultant $\delta^{*}$ and $\epsilon$ pairs form the optimal ROC curve.} and the line determined by $1-\delta_a(t)=\epsilon_a(t)$.
The set of all  feasible operating points is denoted by $\mathbb{A}_{\delta}(a(t))$.
Based on the sensing result $\Theta_a(t) \in \{0,1\}$, the SU decides a pair of access probabilities $\big(f_a(0,t),f_a(1,t)\big) \in [0,1]^2$ for this channel, where $f_a(\theta,t)$
is the access probability on channel $a$ in slot $t$ with $\Theta_a(t)=\theta$.
Denoting $\Phi_a(t)\in\{0(\textrm{not~access}),1(\textrm{access})\}$ as the SU's access action on channel $a$ in slot $t$, $f_a(\theta,t)$ is  expressed by the following conditional probability
\begin{equation}
f_{a}(\theta,t)= P\{\Phi_{a}(t)=1 |\Theta_{a}(t)=\theta\}. \label{f_a}
\end{equation}

At the end of slot $t$, the SU transmitter receives error free feedback $K_a(t) \in \{0,1\}$ from the SU receiver, where $K_a(t)=1$ means that the SU's information is transmitted successfully, and $K_a(t)=0$ represents that the SU transmits but the SU receiver fails to receive the transmitted information due to that the PU is busy and hence a collision occurs.
Note that if the SU does not transmit, the SU transmitter will not receive any feedback. For the ease of representation, we assume that this case is also represented by $K_a(t)=0$.
Note that $K_a(t)$ is for  the SU transmitter and receiver to maintain their decision synchronization \cite{Y.ChenIT08}.

\section{OSA Design in Partially Observable Environments under Reactive PU Model}
As described in Section \ref{subsection: SU_OSA},  in each time-slot, the SU  selects one channel for sensing and thus is unable to observe the PUs'  states in the other $N-1$ channels.
Even for the case of $N=1$, the SU may not be able to obtain the PU's actual state  due to sensing errors. This renders the PUs' states are only \emph{partially} observable at the SU over time.
Thus,  we adopt a POMDP model to design the SU's  OSA.
In this section, we describe the POMDP and formulate the  SU's optimal OSA design  as a constrained POMDP problem.

\subsection{POMDP Elements}
A POMDP in general consists of the following elements \cite{POMDP.Survey}: a set of time-slots $\{1,\ldots,T\}$, where $T$ is called the horizon,  and a set of system states (with transition probabilities), actions, observations (with observation probabilities) and rewards, for each of the time-slots. In this subsection, we formulate the POMDP model for the SU's OSA by specifying these elements.

Specifically, we consider a finite-horizon POMDP with $T<\infty$.
Each system state in the POMDP is denoted by an $N$-element vector, with each element representing one PU's state at its assigned channel.
For brevity, we  represent the states in Fig.~2, namely, $00,01,10,11$, using $0,1,2,3$, respectively, and denote $\mathbb{C}_{S}=\{0,1,2,3\}$ as the set of the states.
Since  $|\mathbb{C}_{S}|=4$, there are in total $4^{N}$ POMDP system states.
Denote the action space of the POMDP as $\mathbb{A}~=~\Big\{\Big(a(t), \big(\epsilon_{a}(t), \delta_{a}(t)\big),\big(f_{a}(0,t),f_{a}(1,t)\big)\Big): a(t)\in \mathbb{A}_{S},\big(\epsilon_{a}(t), \delta_{a}(t)\big)\in \mathbb{A}_{\delta}(a(t)), \big(f_{a}(0,t),f_{a}(1,t)\big) \in [0,1]^2\Big\}$.
Let the observation in the POMDP be $K_a(t) \in \{0,1\}$ in slot $t$. Suppose  that $A(t)=A$, where $A=\big(a(t)=a,(\epsilon_{a}(t), \delta_{a}(t)),(f_a(0,t),f_a(1,t))\big)$. The observation probability is then denoted as $U_{A}(k|i)\triangleq P\{K_a(t)=k |i,A\}$ with $k=\{0,1\}$, which represents the conditional probability of observing $K_a(t)=k$ given that the SU's action is $A$ and the PU's state over selected channel $a$ is $i$, $i\in \mathbb{C}_S$.
Let $\emph{I}(a,t)=0$ and $\emph{I}(a,t)=1$ represent channel $a$ being busy and idle in slot $t$, respectively.
Denote $1_{[x]}$ as an indicator function, which equals 1 if $x$ is true and 0 otherwise.
Note that $K_a(t)=\emph{I}(a,t)\Phi_{a}(t)$.
By applying a derivation similar to that in \cite{Y.ChenIT08}, we  obtain the following result:
\begin{equation}
  U_{A}(k|i)=\left\{
   \begin{array}{l}
   1_{[\emph{I}(a,t)=1]}g_a(t),  \textrm{~if~}k=1\\
   1- U_{A}\{k=1|i\},  \textrm{~if~}k=0
   \end{array}
  \right.  \label{kinU}
\end{equation}
where
\begin{align}
g_a(t)
& \triangleq  P\{\Phi_a(t)=1|a(t)=a,\emph{I}(a,t)=1\} \\
&=\epsilon_a(t)\times f_a(0,t)+(1-\epsilon_a(t))\times f_a(1,t)  \label{g_a}
\end{align}
is the SU's conditional access probability on channel $a$, given that the PU is idle on this channel in slot $t$ and the SU selects channel $a$ to sense.
If the SU transmits successfully in slot $t$, i.e., the observation $K_a(t)=1$, it will obtain a unit throughput.
Denote the instantaneous reward of the SU in slot $t$ by $R_S(t)$.
By assuming a unit bandwidth ($B=1$) for each channel, we have
\begin{equation}
R_S(t)=K_a(t)\times B=K_a(t). \label{ImmediateR}
\end{equation}

At the end of each slot, the POMDP system moves to the next state from the current state according to the POMDP state transition probability.
Since the PUs' states over different channels evolve \emph{independently}, the POMDP system state transition probability is obtained as the product of each PU's state transition probability.
We thus focus on the state transition probability for a given channel $n\in \mathbb{A}_{S}$.
Let $i,j\in \mathbb{C}_{S}$ and denote $P_n(i|j,A)$ as the transition probability from state $j$ in slot $t$ to state $i$ in slot $t+1$ over channel $n$ under the SU's action $A=\big(a(t)=a,(\epsilon_{a}(t), \delta_{a}(t)),(f_a(0,t),f_a(1,t))\big)$.
If $n\neq a$, the SU does not select channel $n$ for sensing.
Fig.~3(a) shows the state transition probabilities for this case, where we use superscript $n$ to denote channel $n$. These probabilities are easily obtained from Fig.~2 with the SU's access action being \emph{not access}.
If $n=a$, the SU selects channel $a$ to sense (and probably access).
Fig.~3(b) shows the state transition probabilities for this case.
If the state is `01' or `11', the transition probabilities are independent of the SU's action; otherwise, according to Fig.~2, they are subject to the SU's access action $\Phi_a(t)$.
Note that given the PU's state on channel $a$, $\Phi_a(t)$ is determined in probability by $(\epsilon_{a}(t), \delta_{a}(t))$ and $(f_a(0,t),f_a(1,t))$. Based on Fig.~2, with the fact that
$P_a(i|j=2,A)= P_a(i|j=0,A),\forall i \in \mathbb{C}_{S}$,  we  thus obtain the following transition probabilities:
$$
 \!\!\left\{
  \begin{array}{l}
  \!\!\!\!P_a(i\!=\!0|j\!=\!0,\!A)\!=\!P_a(i\!=\!0|j\!=\!2,\!A)\!=\!(1\!\!-\!\! \mu_a(t))\!\! \times\! \!(1\!\!-\! \!\alpha_0),\\
  \!\!\!\!P_a(i\!=\!1|j\!=\!0,\!A)\!=\!P_a(i\!=\!1|j\!=\!2,\!A)\!=\!(1\!-\! \mu_a(t))\!\! \times \!\! \alpha_0, \\
  \!\!\!\!P_a(i\!=\!2|j\!=\!0,\!A)\!=\!P_a(i\!=\!2|j\!=\!2,\!A)\!=\! \mu_a(t)\! \!\times \!\!(1\!-\!\alpha_1),  \\
  \!\!\!\!P_a(i\!=\!3|j\!=\!0,\!A)\!=\!P_a(i\!=\!3|j\!=\!2,\!A)\!=\! \mu_a(t)\! \!\times \!\! \alpha_1, 
  \end{array}
 \right.
$$
where
\begin{align}
\mu_a(t)& \triangleq  P\{\Phi_a(t)=1|a(t)=a,\emph{I}(a,t)=0\} \\
&=(1-\delta_a(t))\times f_a(0,t)+\delta_a(t)\times f_a(1,t) \label{mu_a}
\end{align}
is the SU's conditional access probability on channel $a$, given that the PU is busy on this channel in slot $t$ and the SU selects channel $a$ to sense.

\begin{figure}[t]
\centering
\DeclareGraphicsExtensions{.eps,.mps,.pdf,.jpg,.png}
\DeclareGraphicsRule{*}{eps}{*}{}
\includegraphics[angle=0, width=0.36\textwidth]{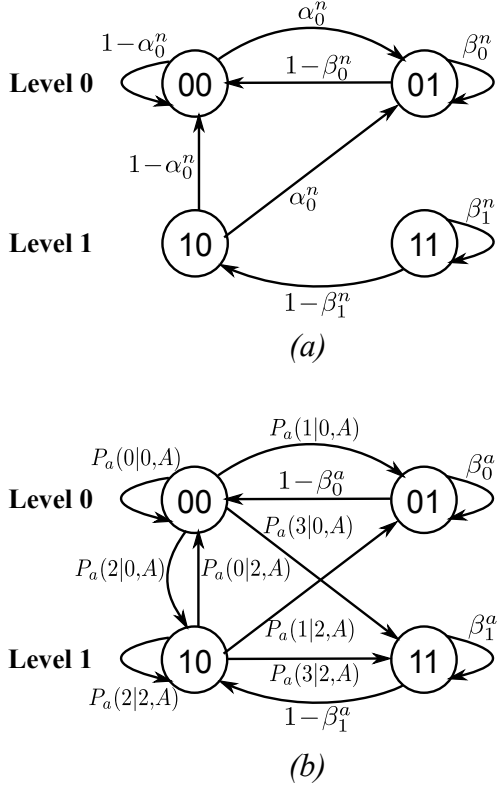}
\caption{State transition probabilities in the reactive PU model: $(a)$ shows the case when $n\neq a$, and $(b)$ shows the case when $n=a$.}
\label{fig:4phases}
\end{figure}

\subsection{Belief on POMDP States} \label{sec: POMDP elements}
In the POMDP model, the system states are not exactly known at the SU. However, based on the SU's previous actions and observations, a belief on the POMDP system state can be obtained.
The belief is defined as the conditional probability distribution over all possible POMDP system states given the history of the SU's actions and observations. As shown in  \cite{POMDP}, the belief on the POMDP system state is a \emph{sufficient statistic} for the design of optimal actions. For our model, the POMDP system state consists of the PUs'  states over independent channels. The SU's belief on the POMDP system state is thus given by the belief on the PU's 4 possible  states over $N$ channels. Hence, we adopt a $4\times N$ matrix $\mathbf{\Lambda}(t)=\big\{\lambda_{nj}(t)\big\}_{n\in \mathbb{A}_S,j\in \mathbb{C}_S}$ as the \emph{belief state} of the POMDP, where the element $\lambda_{nj}(t)$ represents the conditional probability that the state of channel $n\in \mathbb{A}_S$ is $j\in \mathbb{C}_S$ in slot $t$, given the SU's decision and observation history.
We have $\sum_{j\in \mathbb{C}_S}\lambda_{nj}(t)=1$ for $t\in\{1,\ldots,T\}$. Clearly, the space of the POMDP belief states is $[0,1]^{4\times N}$.
The belief state is updated slot by slot based on the SU's previous actions and observations.
Suppose that channel $a$ is selected in slot $t$.
The belief  on states in slot $t+1$ is updated as follows.
\begin{itemize}
\item If $n\neq a$, the updated belief on the state is not affected by the SU's action. We thus have
        \begin {equation}
          \lambda_{nj}(t+1)=\sum_{i\in \mathbb{C}_S} \lambda_{ni}(t)P_n(j|i,A),~\forall j\in \mathbb{C}_S.\label{updatedBM2}~~~~~~~
       \end{equation}
\item If $n=a$, the updated belief on the state is related to the SU's action and is
      obtained according to the observation $K_a(t)=k$ via Bayes rule \cite{POMDP} as
        \begin {equation}
        \! \lambda_{aj}(t+1)\!=\!\frac{\sum_{i\in \mathbb{C}_S}\! \lambda_{ai}(t)P_a(j|i,\!A)U_{\!A}(k|i)}
                           {\sum_{i=0}^{3} \lambda_{ai}(t) U_{\!A}(k|i)},~\forall j\!\in\! \mathbb{C}_S. \label{updatedBM1}
       \end{equation}
\end{itemize}

\subsection{Policy Description}
The OSA design for the SU is  given by a sensing policy $\pi_{s}$,  a sensor operating policy $\pi_{\delta}$ and an access policy $\pi_{c}$. Specifically, the sensing policy specifies a sequence of functions as $\pi_{s}\triangleq \{d_1^{s},d_2^{s},...,d_T^{s}\}$, where $d_t^{s}$ in slot $t$ maps a belief state $\mathbf{\Lambda}(t)$ to the channel $a(t)\in \mathbb{A}_{s}$ selected to sense for this slot.
Given the selected channel $a(t)=a$,  the sensor operating policy specifies a sequence of functions as $\pi_{\delta}\triangleq \{d_1^{\delta},d_2^{\delta},...,d_T^{\delta}\}$, where  $d_t^{\delta}$ in slot $t$ maps  $\mathbf{\Lambda}(t)$ to a feasible sensor operating point $(\epsilon_{a}(t), \delta_{a}(t)) \in\mathbb{A}_{\delta}(a(t))$ for this slot.
Similarly,  the access policy is specified as $\pi_{c}\triangleq \{d_1^{c},d_2^{c},...,d_T^{c}\}$. Given the sensing result $\Theta(t)\in \{0,1\}$ in slot $t$, $d_t^{c}$ maps  $\mathbf{\Lambda}(t)$ to the access probabilities $(f_a(0,t),f_a(1,t))\in [0,1]^2$ in slot $t$.

\subsection{Constrained POMDP Problem}
The optimal OSA design $\{\pi_{s}^{*},\pi_{\delta}^{*}, \pi_{c}^{*}\}$  is obtained by solving a constrained POMDP problem, which maximizes the SU's expected reward over $T$ slots subject to various  constraints to protect the PUs' transmission.
Specifically, the objective of the POMDP problem is to obtain
\begin{eqnarray}
\{\pi_{s}^{*},\!\pi_{\delta}^{*},\!\pi_{c}^{*}\}\!=\!\mathrm{arg}\!
\mathop{\max}_{\pi_{s},\! \pi_{\delta},\! \pi_{c}} E_{\{\pi_{s},\! \pi_{\delta},\! \pi_{c}\}} \Big\{\!\!\sum_{t=1}^{T}R_S(t)|\mathbf{\Lambda}(1)\Big\}, \label{objective}
\end{eqnarray}
where $\mathbf{\Lambda}(1)$ is the initial belief state in slot $t=1$. The elements in $\mathbf{\Lambda}(1)$  are set according to the stationary distribution of the underlying Markov chain, under the assumption that the PU is  at Level 0 initially, i.e., $\lambda_{n0}(1)=\frac{1-\beta_0^n}{1+\alpha_0^n-\beta_0^n}$, $\lambda_{n1}(1)=1-\lambda_{n0}(1)$, and $\lambda_{n2}(1)=\lambda_{n3}(1)=0$, $\forall n\in \mathbb{A}_S$. Suppose that the SU selects channel $a$ in slot $t$. Given  $\mathbf{\Lambda}(t)$, from (\ref{ImmediateR}), the SU's expected reward in slot $t$ over all possible PU's states on channel $a$ and the SU's observations is  obtained as
\begin{equation}
E_{\{\pi_{s}, \pi_{\delta}, \pi_{c}\}} \Big\{R_S(t)|\mathbf{\Lambda}(t)\Big\}=(\lambda_{a1}+\lambda_{a3})\times g_a(t). \label{SU_expect_R}
\end{equation}

We consider two types of protection methods for the reactive PUs, namely, the short-term conditional collision probability (SCCP) constraint and the long-term PU throughput (LPUT) constraint, for which the detailed formulations will be given in Section \ref{section: SCCP} and Section \ref{section: LPUT}, respectively.

\begin{table*}[t]
\caption[The non-monotonicity  of $Q_{t}(\mathbf{\Lambda}(t)|A)$ with respect to $g_a(t)$ under the Reactive-PU model]%
{The non-monotonicity of $Q_{t}(\mathbf{\Lambda}(t)|A)$ with respect to $g_a(t)$ under the Reactive PU model \\ {\textnormal{(with  $\alpha_0^a=0.5,\beta_0^a=0.5,\alpha_1^a=0.9, \beta_1^a=0.9$).}} }
\begin{center} {
\begin{tabular}{|c|c|c|c|c||c|c|}
\thickhline
\backslashbox { Cases }{ Actions } & $f_a(0,1)$ & $f_a(1,1)$ & $\epsilon_a(1)$ & $\delta_a(1)$ & $g_a(1)$ & $Q_{1}(\mathbf{\Lambda}(1)|A)$ \\
\hline
\textit{case 1}  &     0 & 0.5  &   0.5 & 0.5  &  0.25 & 0.675  \\ \hline
\textit{case 2}  &     0 & 0.6  &   0.5 & 0.5  &  0.3& 0.71  \\  \hline
\textit{case 3}  &     0 & 0.6  &   0.5 & 0.1  &  0.3 & 0.662  \\
\thickhline
\end{tabular} }
\end{center}
\label{tab}
\end{table*}

\section{OSA Design under SCCP Constraint} \label{section: SCCP}
The SCCP constraint has been widely adopted in the literature, e.g., \cite{Y.ChenIT08}-\cite{Tong.JSAC.2011}, to protect the PU's transmission by imposing a conditional collision probability constraint $\zeta$ on  channel $n\in \mathbb{A}_S$, and is defined as
\begin{equation}
\sigma_n(t)\!\triangleq\! P\{\Phi_n(t)\!\!=\! 1|\emph{I}(n,\!t)\!=\!0\}\!\leq\! \zeta,  \forall n\!\in \! \mathbb{A}_S,  \forall t \! \in \! \{1,\ldots,\!T\}. \label{DefinePa}
\end{equation}
The SCCP constraint ensures that on every channel, the PU experiences collisions from the SU for no more than $\zeta$ fraction of the transmission time.
Thus, the PU's throughput under the SU's OSA is at least $100 \times (1-\zeta)$ percentage of that without presence of the SU, if the PU is non-reactive.
However, the  effectiveness of the SCCP constraint  in protecting the reactive PU's transmission remains unaddressed yet in the literature.
In this section, we  adopt the conventional SCCP constraint and design the optimal OSA policy for the SU under the reactive PU model. By adopting the optimal OSA policy, we  show that  the SCCP constraint is not able to provide effective protection to the reactive PU's transmission.

Suppose $a(t)=a$.
With a derivation similar to that in \cite{Y.ChenIT08}, we obtain from (\ref{DefinePa}) that
\begin{equation}
  \!\!\sigma_{n}(t)\!=\!\left\{
   \begin{array}{l}
   \!\!\!(1\!-\!\delta_{a}(t))\!\times\! f_{a}(0,t)\!+\!\delta_{a}(t)\!\times\! f_{a}(1,t),~  \textrm{if}~n=a.\\
   \!\!\!0,~\textrm{if}~n\neq a.  \\
   \end{array}
  \right. \label{colconstraint}
\end{equation}
Note that if $n=a$, $\sigma_a(t)$ has the same expression as $\mu_a(t)$ in (\ref{mu_a}).
Using (\ref{objective}) and (\ref{colconstraint}), the OSA design for the SU under the SCCP constraint is formulated as
\begin{align}
\mathrm{(\textrm{P1})}:~\mathop{\mathrm{max.}}_{\pi_{s}, \pi_{\delta}, \pi_{c}}&~~
E_{\{\pi_{s}, \pi_{\delta}, \pi_{c}\}}\big\{ \sum_{t=1}^{T}R_S(t)|\mathbf{\Lambda}(1)\big\}  \nonumber \\
\mathrm{s.t.} & ~~\sigma_n(t)\leq \zeta,~~~\forall n\in \mathbb{A}_S, ~\forall t\{1,\ldots,T\}. \nonumber
\end{align}

According to the principle of dynamic programming \cite{DP},  (P1) can be decoupled into $T$ subproblems without loss of optimality.
Each subproblem is to find a value function $V_{t}(\mathbf{\Lambda}(t))$, $1 \leq t \leq T$, which represents the SU's maximum expected reward that can be obtained from slot $t$  to slot $T$ under the SCCP constraint, given the belief state $\mathbf{\Lambda}(t)$.
Given the SU's action $A(t)=A$ and the belief state $\mathbf{\Lambda}(t)$, the reward in slot $t$, $1\leq t\leq T-1$, consists of two parts:
the SU's expected immediate reward $E_A\{R_S(t)|\mathbf{\Lambda}(t)\}$ and the SU's maximum expected future reward $E_A\{V_{t+1}(\mathbf{\Lambda}(t+1))|\mathbf{\Lambda}(t)\}$ over all possible  PU's states on channel $a$ and the SU's observations in slot $t$, where $\mathbf{\Lambda}(t+1)$ is updated from $\mathbf{\Lambda}(t)$ according to  (\ref{updatedBM2}) and (\ref{updatedBM1}).
In the last slot $t=T$, the  SU's expected immediate reward alone is the value function $V_{T}(\mathbf{\Lambda}(T))$.
By averaging over all possible PUs' states and the SU's observations $K_a(t)=k \in \{0,1\}$ and maximizing over all SU's actions $A \in \mathbb{A}$, we obtain the value functions expressed as
\begin{align}
&\!\!\!V_{t}(\mathbf{\Lambda}(t))\!=\! \max_{A\in \mathbb{A}}\!\sum_{k=0}^{1}\!\sum_{i=0}^3\!\lambda_{ai}(t)U_{\!A}(k|i)[k\!+\!V_{t\!+\!1}(\mathbf{\mathbf{\Lambda}}(t\!+\!1))], \nonumber \\
&~~~~~~~~~~~~~~~~~~~~~~~~~~~~~~~~~~~~~~~~~~~~~~~~1\leq t\!\leq \!T\!-\!1. \label{ValueF} \\
&\!\!\!V_{T}(\mathbf{\Lambda}(T))\!=\!\max_{A\in \mathbb{A}}\! \sum_{k=0}^{1} \!\sum_{i=0}^3\! \lambda_{ai}(T) U_{\!A}(k|i)\!\times\! k,~~t\!=\!T. \label{VFT}
\end{align}

By computing the value functions given in (\ref{ValueF}) and (\ref{VFT}) recursively backward in time and searching over all possible actions $A(t)\in \mathbb{A}$ in each slot, we can find the optimal policy $\{\pi_{s}^{*},\pi_{\delta}^{*},\pi_{c}^{*}\}$ for (P1) that maximizes the SU's expected reward over $T$ slots, i.e., $V_{1}(\mathbf{\Lambda}(1))$ in (\ref{ValueF}),
under the SCCP constraint.
However, (\ref{ValueF}) and (\ref{VFT}) are generally intractable and computationally prohibitive
due to the infinite and uncountable action space $\mathbb{A}$ \cite{Y.ChenIT08}.

\subsection{Optimal OSA Policy Based on Separation Principle}
In \cite{Y.ChenIT08}, a \emph{separation principle} was proposed to obtain the optimal policy for an OSA design problem similar to (P1), but under the non-reactive PU model.
It is shown in \cite{Y.ChenIT08} that with this method, the sensing policy can be separately designed from the sensor operating policy and the access policy.
The optimal sensor operating policy $\pi_{\delta}^{*}$  and the optimal access policy $\pi_{c}^{*}$ over any selected channel $a$ for sensing is obtained by maximizing  $g_a(t)$ given in (\ref{g_a}), subject to the SCCP constraint in slot $t$.
Since the action space of $\mathbb{A}_S$ is finite and countable, the optimal sensing policy $\pi_{s}^{*}$ is then  obtained by standard dynamic programming techniques, given $\pi_{\delta}^{*}$  and $\pi_{c}^{*}$.

For the ease of presentation, we define  $Q_{t}(\mathbf{\Lambda}(t)|A)$ as the SU's maximum expected reward that can be obtained from slot $t$ to slot $T$, given the SU's action $A\in\mathbb{A}$ in slot $t$ and the belief state $\mathbf{\Lambda}(t)$, i.e.,
\begin{align}
&\!\!\!Q_{t}(\mathbf{\Lambda}(t)|A)\!=\!\sum_{i=0}^3\! \sum_{k=0}^{1}\!\lambda_{ai}(t)U_{\!A}(k|i)[k\!+\!V_{t\!+\!1}(\mathbf{\Lambda}(t\!+\!1))], \nonumber \\
&~~~~~~~~~~~~~~~~~~~~~~~~~~~~~~~~~~~~~~~~~~~~~~~~1\!\leq \!t\!\leq\! T\!-\!1.  \label{Q_t}  \\
&\!\!\!Q_{T}(\mathbf{\Lambda}(T)|A)\!=\!\sum_{i=0}^3\! \sum_{k=0}^{1}\!\lambda_{ai}(t)U_{\!A}(k|i)\!\times\! k, ~~t\!= \!T. \label{Q_T}
\end{align}
Then we have
\begin{equation}
V_t(\mathbf{\Lambda}(t))=\mathrm{arg}\mathop{\max}_{A\in \mathbb{A}} Q_{t}(\mathbf{\Lambda}(t)|A), ~~1\leq t\leq T. \label{V_Q_t}
\end{equation}

As shown in \cite{Y.ChenIT08}, the main reason why the separation principle holds under the non-reactive PU model is that over any selected channel $a$ at slot $t$, $Q_{t}(\mathbf{\Lambda}(t)|A)$ strictly increases with $g_a(t)$ given in (\ref{g_a}). 
However, under the proposed reactive PU model, since the PU's channel access probabilities in the future slots depend on the SU's current action decision, $Q_{t}(\mathbf{\Lambda}(t)|A)$ is generally not monotonically increasing with $g_a(t)$. A simple example with $T=2$ and $N=1$ is shown in TABLE I to validate this observation.
In TABLE I,  we consider three cases, where the SU has different sensor operating points and spectrum access probabilities over channel $a$ in slot $t=1$. We compute $g_a(1)$ and $Q_{1}(\mathbf{\Lambda}(1)|A)$ based on (\ref{g_a}), (\ref{Q_t}) and (\ref{Q_T}) in slot $t=1$ and compare them under these cases.
It is shown that both $g_a(1)$ and $Q_{1}(\mathbf{\Lambda}(1)|A)$ in \emph{case 2} are larger than that in \emph{case 1}; however, the SU's actions in \emph{case 3} give a larger $g_a(1)$ but a  smaller $Q_{1}(\mathbf{\Lambda}(1)|A)$, compared to \emph{case 1}.
Hence, $Q_{t}(\mathbf{\Lambda}(t)|A)$ does not necessarily increase with $g_a(t)$. As a result, the proof in \cite{Y.ChenIT08} for the separation principle does not apply to our problem  under the reactive PU model.

Interestingly, as shown in the following theorem, a separation principle similar to that in  \cite{Y.ChenIT08} is applicable under the reactive PU model without any loss of optimality.
This is true mainly due to the fact that the SCCP constraint in (P1) only depends on $\pi_{\delta}$ and $\pi_c$.
\begin{theorem} \label{theorem: separation principle}
The SU's optimal OSA policy for (P1) under the reactive PU model and the SCCP constraint is obtained by the following two steps:
\begin{itemize}
\item Step 1: Determine the optimal sensor operating policy $\pi_{\delta}^{*}$ and  the optimal access policy $\pi_{c}^{*}$. Specifically, in slot $t$ , supposing $a(t)=a$,
the optimal policies of $\pi_{\delta}^{*}$ and $\pi_{c}^{*}$ are given by
\begin{equation}
  \left\{
   \begin{array}{l}
  \!\! \delta_a^{*}(t)=\zeta, \\
  \!\!\epsilon_a^{*}(t) ~\textrm{is on the optimal ROC curve}\\
  \!\!~~~~~~~\textrm{corresponding to} ~ \zeta, \\
  \!\! f_a^{*}(0,t) = 0,  \\
  \!\! f_a^{*}(1,t) = 1.  
   \end{array}
  \right. \label{Optimalsolution}
\end{equation}
\item Step 2: Apply the optimal policies $\pi^{*}_{\delta}$ and $\pi^{*}_{c}$ in Step 1 to obtain the optimal sensing policy $\pi^{*}_{s}$ by solving the following unconstrained POMDP:
 \begin{equation}
  \pi^{*}_{s}=\mathrm{arg}\mathop{\max}_{\pi_{s}} E_{\pi_{s}}\Big\{\sum_{t=1}^{T} R_S(t)|\mathbf{\Lambda}(1),\pi^{*}_{\delta},\pi^{*}_{c}\Big\}. \label{pi_s}
 \end{equation}
\end{itemize}
\end{theorem}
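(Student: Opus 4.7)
The strategy is to pin down the optimal $\pi_\delta^*$ and $\pi_c^*$ independently of the sensing policy $\pi_s$, exploiting the paper's own hint that the SCCP constraint (\ref{colconstraint}) depends only on $(\pi_\delta,\pi_c)$ and, on the sensed channel $a$, reduces to $\mu_a(t)\leq \zeta$. The first step is a reduction: I would show that $Q_t(\mathbf{\Lambda}(t)|A)$ depends on $A$ only through the triple $(a(t),\mu_a(t),g_a(t))$. Indeed, the observation probabilities in (\ref{kinU}) involve $A$ only via $g_a$; the transition probabilities $P_a(j|i,A)$ listed just below (\ref{SU_expect_R}) involve $A$ only via $\mu_a$; and $P_n(\cdot|\cdot,A)$ for $n\neq a$ does not depend on $A$ beyond the choice of the sensing channel. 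A backward induction on the Bellman recursions (\ref{ValueF})--(\ref{VFT}) propagates this dependence, yielding $Q_t(\mathbf{\Lambda}|A)=\tilde Q_t(\mathbf{\Lambda}|a,\mu_a,g_a)$. The SCCP-constrained optimization thus reduces to maximizing $\tilde Q_t$ over the two-dimensional feasible set of $(\mu_a,g_a)$ with $\mu_a\leq\zeta$.

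Next, a linear-programming argument on the sensor operating point identifies the upper envelope of this feasible region. Since any admissible operating point satisfies $\epsilon+\delta\leq 1$, one checks that $(1-\epsilon)/\delta\geq\epsilon/(1-\delta)$, so the marginal efficiency of spending the $\mu_a$-budget on $f_a(1,t)$ exceeds that of $f_a(0,t)$; hence $(f_a(0,t),f_a(1,t))=(0,1)$ is without loss of optimality, and the envelope is traced along the ROC by $\mu_a=\delta$ and $g_a=1-\epsilon$. To pin the optimal envelope point at $\delta=\zeta$, I would then prove by backward induction that $\tilde Q_t$ is non-decreasing along this envelope. The base case $t=T$ is immediate since $\tilde Q_T=(\lambda_{a1}(T)+\lambda_{a3}(T))g_a$ grows with $g_a$, which is non-decreasing in $\delta$ on the ROC. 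For the inductive step, the reactive-PU assumptions $\alpha_1\geq\alpha_0$ and $\beta_1\geq\beta_0$ imply that as $\mu_a$ grows, the one-step-ahead belief $\mathbf{\Lambda}(t+1)$ shifts mass toward the idle states $\{1,3\}$; carrying an auxiliary monotonicity invariant on $V_{t+1}$ (namely, non-decreasingness as the idle-state belief grows) through the induction then delivers envelope monotonicity of $\tilde Q_t$, and the optimum is at $\delta=\zeta$, which is exactly (\ref{Optimalsolution}).

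The main obstacle is precisely this inductive monotonicity: Table \ref{tab} explicitly warns that $\tilde Q_t$ fails to be monotone in $g_a$ alone when $\mu_a$ is varied inconsistently (\emph{case 1} vs.\ \emph{case 3}), so the invariant must be stated precisely enough that restricting the comparison to the envelope, where $\mu_a$ and $g_a$ move together along the ROC, restores it. Once (\ref{Optimalsolution}) has been established as the optimal $(\pi_\delta^*,\pi_c^*)$ for every belief state and every sensing channel, substituting it back into (P1) eliminates the SCCP constraint and leaves an unconstrained POMDP over the finite, countable sensing set $\mathbb{A}_S$, which yields $\pi_s^*$ as in (\ref{pi_s}) by standard dynamic programming on (\ref{ValueF})--(\ref{VFT}).
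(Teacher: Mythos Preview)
Your reduction of $Q_t$ to a function of $(a,\mu_a,g_a)$ alone is correct and matches the paper's implicit first step. The gap is in the inductive invariant. The hypothesis that $V_{t+1}$ is ``non-decreasing as the idle-state belief grows'' is too coarse to close the induction, for two reasons. First, the Bayes update (\ref{updatedBM1}) makes $\mathbf{\Lambda}(t+1)$ a \emph{nonlinear} (ratio) function of $(\mu_a,g_a)$, so a qualitative monotonicity assumption on $V_{t+1}$ does not propagate to monotonicity of $\tilde Q_t$ along the envelope. Second, the reactive model distinguishes the two idle states $1$ and $3$: increasing $\mu_a$ shifts next-step mass not merely toward ``idle'' but specifically toward state $3$ at the expense of state $1$, and the induction must know the value function prefers $3$ to $1$. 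Your invariant lumps these together. The same issue undercuts your envelope step: declaring $(f_a(0,t),f_a(1,t))=(0,1)$ ``without loss of optimality'' presupposes that $\tilde Q_t$ is increasing in $g_a$ at fixed $\mu_a$, which your LP argument identifies but does not prove.

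The paper closes both gaps with a sharper invariant: it shows by backward induction that $V_t$ is \emph{affine} in the belief, namely $V_t(\mathbf{\Lambda})=D_t(\lambda_{a0}+\lambda_{a2})+F_t\lambda_{a1}+H_t\lambda_{a3}+z_t$ with $D_t\geq 0$ and $H_t\geq F_t\geq 0$. Affineness is the crux: when $V_{t+1}$ is linear in $\mathbf{\Lambda}(t+1)$, the Bayes normalizer in (\ref{updatedBM1}) cancels against the weight $\sum_i\lambda_{ai}U_A(k|i)$ in (\ref{Q_t}), so $Q_{t-1}$ becomes affine in $(\mu_a,g_a)$ with non-negative coefficients (the condition $H_t\geq F_t$ is precisely what makes the $\mu_a$-coefficient non-negative). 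A preliminary lemma then shows that maximizing $C_1\mu_a+C_2 g_a+C_3$ with $C_1,C_2\geq 0$ under $\mu_a\leq\zeta$ is achieved exactly at (\ref{Optimalsolution}), and the recursion closes by verifying the sign conditions on $D_{t-1},F_{t-1},H_{t-1}$ explicitly. Your overall architecture is right; the fix is to upgrade the invariant from qualitative monotonicity to this explicit affine form with the stated sign conditions.
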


\begin{proof}
Please refer to Appendix \ref{appendix:proof 1}.\footnote{Note that by setting $\alpha_1^n=\alpha_0^n$ and $\beta_1^n=\beta_0^n$, $\forall n\in \mathbb{A}_s$, the proposed proof  for the separation principle also holds for the non-reactive PU model.}
\end{proof}
\begin{remark}
Since the optimal action decisions, $\delta_a^{*}(t)$, $\epsilon_a^{*}(t)$, $f_a^{*}(0,t)$, and $f_a^{*}(1,t)$, given in (\ref{Optimalsolution}), are independent of the sensing policy, the optimal sensing policy $\pi^{*}_{s}$ can  be separately designed (as shown in Step 2 of Theorem \ref{theorem: separation principle}).
Since all the optimal actions $\delta_a^{*}(t)$, $\epsilon_a^{*}(t)$, $f_a^{*}(0,t)$, and $f_a^{*}(1,t)$ are time-invariant, the optimal polices $\pi^{*}_{\delta}$ and $\pi^{*}_{c}$ are  independent of belief states.
With $f_a^{*}(0,t) = 0$ and  $f_a^{*}(1,t) = 1$,  it follows that the SU always trusts the spectrum sensing result even though there may exist sensing errors, i.e., the SU accesses channel $a$ in slot $t$ with probability 1 when the sensing result is $\Theta_a(t)=1$, and with probability 0 when the sensing result is $\Theta_a(t)=0$.
\end{remark}

\subsection{SCCP Constraint for Protecting  Reactive PUs}
In this subsection, we  show that the SCCP constraint is not sufficient to guarantee the PU's benchmark throughput under the reactive PU model.

We first derive the PU's benchmark throughput on each channel.
Denote the PU's throughput on channel~$n$  in slot $t$ by $R_{P,n}(t)$. If the PU on channel $n$ accesses the assigned channel in slot $t$, $1\leq t \leq T$, and transmits successfully, it will obtain a  unit throughput. Given the current belief state $\mathbf{\Lambda}(t)$ and the SU's OSA policies  $\pi_{s}$, $\pi_{\delta}$, and $\pi_{c}$, it is then easy to obtain that the PU's expected throughput on channel $n$  over  all possible states in slot $t$ is
\begin{equation}
 E_{\pi_{s}, \pi_{\delta}, \pi_{c}}\big\{R_{P,n}(t)|\mathbf{\Lambda}(t)\big\}=P\{\emph{I}(n,t)=0\}\times (1-\sigma_n(t)), \label{PU_IMMR_exp}
\end{equation}
where  $P\{\emph{I}(n,t)=0\}=\lambda_{n0}(t)+\lambda_{n2}(t)$ is the PU's access probability on channel $n$ in slot $t$, and $\sigma_n(t)$ is  given in (\ref{colconstraint}).
By summing the PU's  throughput over $T$ slots and dividing the sum by $T$,  the PU's normalized throughput on channel $n$, denoted by $R_{P,n}^{o}$, is given as
\begin{equation}
 R_{P,n}^{o}= \frac{1}{T}\times E_{\pi_{s}, \pi_{\delta}, \pi_{c}}\big\{\sum_{t=1}^{T} R_{P,n}(t)|\mathbf{\Lambda}(1)\big\}, ~\forall n\in \mathbb{A}_S.
\label{eq: PU_normal_Throughput}
\end{equation}

Note that under the non-reactive PU model, the PU's channel access probability is independent of the SU's spectrum access policy and thus remains the same in each slot.
Specifically, from the stationary distribution of the underlying Markov chain, we have $P\{\emph{I}(n,t)=0\}=\frac{1-\beta_0^n}{1+\alpha_0^n-\beta_0^n}$, $n\in \mathbb{A}_S$, $t\in\{1,\ldots,T\}$.
Since  $\sigma_n(t)\leq \zeta$, from (\ref{PU_IMMR_exp}), we obtain PU's minimum achievable expected throughput in slot $t$ on channel $n$ as $\frac{1-\beta_0^n}{1+\alpha_0^n-\beta_0^n} \times (1-\zeta)$, where  $1-\zeta$ is the minimum probability that the SU does not collide with the PU in slot $t$ on channel $n$. Then according to (\ref{eq: PU_normal_Throughput}),  we obtain PU's minimum achievable normalized throughput on channel $n$ under the non-reactive PU model as
\begin{equation}
\Upsilon_n = \frac{1-\beta_0^n}{1+\alpha_0^n-\beta_0^n} \times (1-\zeta), ~\forall n\in \mathbb{A}_S. \label{reference_throughput}
\end{equation}
Taking $\Upsilon_n$  as the  benchmark throughput for PU on channel $n$, we say that the  PU system is under \emph{effective protection} if $R_{P,n}^{o}\geq \Upsilon_n$, $\forall n\in \mathbb{A}_S$,  is guaranteed.

Next, we show that under the  SCCP constraint,  the reactive PU's normalized throughput is not  always larger than or equal to  the benchmark throughput.  We consider two cases. One is the single-channel case with $N=1$, and the other is the multi-channel case with $N>1$.

For the case of $N=1$ with $n=a$, since the SU always selects to sense channel $a$ and probably accesses it, $N=1$ can be considered the PU's \emph{worst case} transmission. The following proposition shows that the SCCP constraint is not able to provide effective protection for PU's transmission with $N=1$.
\begin{proposition} \label{proposition: ineffectiveness of SCCP}
With the optimal OSA policy in (\ref{Optimalsolution}), for $N=1$ and $T>1$,  the reactive PU's normalized throughput on channel $a$  is strictly smaller than the benchmark throughput, i.e., $R_{P,a}^{o}< \Upsilon_a$.
\end{proposition}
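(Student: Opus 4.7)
\emph{Step 1 (Reduction to a scalar claim).} First I would invoke Theorem~1 to pin down the optimal policy at $N=1$: $\delta_a^{*}(t)=\zeta$, $f_a^{*}(0,t)=0$, $f_a^{*}(1,t)=1$ for every $t$, which forces $\sigma_a(t)\equiv\zeta$. Substituting into the definitions of $R_{P,a}^{o}$ and $\Upsilon_a$, the claim $R_{P,a}^{o}<\Upsilon_a$ reduces to $\sum_{t=1}^{T}(B_1-B_t)>0$, where $B_t\triangleq E[\lambda_{a0}(t)+\lambda_{a2}(t)]$ is the marginal probability that the PU is busy in slot $t$ and $B_1=(1-\beta_0^{a})/(1+\alpha_0^{a}-\beta_0^{a})$.

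\emph{Step 2 (Marginal chain recursion).} Since the optimal access is a deterministic function of the sensing outcome, averaging the Bayesian update collapses and $\pi^{(t)}$ evolves as a $4$-state Markov chain whose kernel is read directly from Fig.~3(b). The symmetry $P_a(\cdot\mid j=0,A^{*})=P_a(\cdot\mid j=2,A^{*})$ merges the contributions of the two busy states and yields the scalar recursion
\[ B_{t+1}=B_t(1-\tilde\alpha_a)+\pi_1^{(t)}(1-\beta_0^{a})+\pi_3^{(t)}(1-\beta_1^{a}), \]
with $\tilde\alpha_a\triangleq\alpha_0^{a}+\zeta(\alpha_1^{a}-\alpha_0^{a})\geq\alpha_0^{a}$.

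\emph{Step 3 (Aggregate the deficit).} Letting $C_t\triangleq B_1-B_t$, I would subtract the recursion from the Level-$0$ stationary identity $B_1=B_1(1-\alpha_0^{a})+(1-B_1)(1-\beta_0^{a})$ and use $\pi_1^{(t)}+\pi_3^{(t)}=1-B_t$ to collapse everything into the linear recursion
\[ C_{t+1}=r\,C_t+R_t,\qquad C_1=0, \]
with $r\triangleq\beta_0^{a}-\tilde\alpha_a\in[-1,1]$ and $R_t\triangleq B_1\zeta(\alpha_1^{a}-\alpha_0^{a})+\pi_3^{(t)}(\beta_1^{a}-\beta_0^{a})\geq 0$. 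The explicit solution $C_t=\sum_{s=1}^{t-1}r^{t-1-s}R_s$ together with a swap of the order of summation gives
\[ \sum_{t=1}^{T}C_t=\sum_{s=1}^{T-1}R_s\,\frac{1-r^{T-s}}{1-r}. \]
Under the truly reactive model ($\zeta>0$ and $\alpha_1^{a}>\alpha_0^{a}$) the floor $R_s\geq B_1\zeta(\alpha_1^{a}-\alpha_0^{a})>0$ holds uniformly; since $|r|<1$, each geometric weight $(1-r^{T-s})/(1-r)$ is strictly positive; and $T>1$ makes the outer sum non-empty. Hence $\sum_t C_t>0$, which is exactly $R_{P,a}^{o}<\Upsilon_a$.

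\emph{Main obstacle.} The part I expect to need the most care is Step~3. The tempting pointwise approach---prove $B_t\leq B_1$ for every $t$ by induction and sum---actually fails in the anti-persistent regime $\beta_0^{a}<\tilde\alpha_a$: the relevant linear map is then decreasing, $\pi_3^{(t)}$ is not itself monotone in $t$, and $B_t>B_1$ can occur at individual (odd) slots. Passing to the aggregate $\sum_t C_t$ avoids any pointwise control, because only the signed total must be positive, and that positivity follows from the uniform lower bound on $R_s$ together with the elementary non-negativity of truncated geometric series on $(-1,1)$. The only edge cases are the non-reactive or zero-misdetection limits ($\zeta=0$, or $\alpha_1^{a}=\alpha_0^{a}$ together with $\beta_1^{a}=\beta_0^{a}$), in which $R_t\equiv 0$ and the inequality collapses to equality; these degenerate specializations are tacitly excluded by the proposition.
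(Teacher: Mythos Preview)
Your argument is correct and takes a genuinely different route from the paper's. The paper works \emph{backward}: it introduces $G_t(\mathbf{\Lambda}(t)\mid\pi)$, the PU's cumulative throughput from slot $t$ to $T$, shows by induction that $G_t$ is affine in the belief state with coefficients $(q_t,w_t,m_t)$ depending on $(\alpha_0,\beta_0,\alpha_1,\beta_1)$, writes the analogous coefficients $(q_t',w_t',m_t')$ for the non-reactive benchmark, and then proves coefficientwise that $q_t<q_t'$, $w_t\le w_t'$, $m_t<m_t'$, whence $G_1<G_1'=T\Upsilon_a$. Your approach is \emph{forward}: you exploit that the optimal policy in (\ref{Optimalsolution}) is memoryless, so the marginal state law $\pi^{(t)}=E[\mathbf{\Lambda}(t)]$ is a genuine time-homogeneous Markov chain, derive a scalar recursion for the busy probability $B_t$, and turn the deficit $C_t=B_1-B_t$ into a forced linear recursion whose solution you sum explicitly. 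The paper's method stays inside the POMDP/value-function machinery already built for Theorem~\ref{theorem: separation principle} and compares two models term by term; yours bypasses belief states entirely, is more elementary, and yields a closed-form expression $\sum_t C_t=\sum_s R_s(1-r^{T-s})/(1-r)$ for the throughput gap, which also makes transparent why the anti-persistent regime $r<0$ poses no difficulty in the aggregate. One small refinement: your edge-case discussion should also flag the specialization $\alpha_1^{a}=\alpha_0^{a}$ with $\beta_1^{a}>\beta_0^{a}$ at $T=2$, since then $R_1=\pi_3^{(1)}(\beta_1^{a}-\beta_0^{a})=0$ and the inequality degenerates to equality; this is a boundary case of the proposition itself rather than a flaw in your argument.
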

\begin{proof}
Please refer to Appendix \ref{appendix:proof 2}.
\end{proof}

For the case of $N>1$, since the SU can select one from $N$ channels to sense, $R_{P,n}^{o}$, $\forall n\in \mathbb{A}_n$, will be at least equal to that in the worst case with $N=1$.   However, under the reactive PU model with $T>1$, since $R_{P,a}^{o}< \Upsilon_a$ for $N=1$, it is difficult to analyze whether the SCCP constraint is an effective PU protection method for $N>1$. As will be shown later by simulations in  Section VI, the SCCP constraint is not sufficient to guarantee the benchmark throughput of all the $N$ reactive PUs  when $N>1$. Thus, the SCCP constraint is not able to provide effective protection to the PU transmissions.

\section{OSA Design under the LPUT Constraint} \label{section: LPUT}
The long-term PU throughput (LPUT) constraint has been widely adopted in the SS-based CR systems, to guarantee that the PU's transmission quality is always above a predefined threshold regardless of the PU's on/off status  \cite{Levorato.Allerton.12}-\cite{R.ZhangGlobecom08}. In contrast, in the OSA-based CR systems, due to the simplicity and effectiveness of the SCCP constraint in protecting the non-reactive PU  transmissions, the complicated LPUT constraint has not been used to protect PU transmissions, to our best knowledge.

For the reactive PU  transmissions, however, as shown in Section IV-B, the traditional SCCP constraint cannot be adopted as an effective protection method. In this section, we adopt the LPUT constraint as the protection method, which is formulated as
\begin{equation}
R_{P,n}^{o}\geq \Upsilon_n, ~~\forall n\in \mathbb{A}_{S}, \label{eq: PU_throughput_constraint}
\end{equation}
where  $R_{P,n}^{o}$ is defined in (\ref{eq: PU_normal_Throughput}) and $\Upsilon_n$ is given in (\ref{reference_throughput}).
Clearly, the LPUT constraint formulated in (\ref{eq: PU_throughput_constraint}) is able to provide  effective protection to the reactive PU  transmissions, if  it is satisfied by  the SU's OSA policy.
Different from the SCCP constraint in (\ref{DefinePa}), from  (\ref{eq: PU_normal_Throughput}),   the LPUT constraint takes into account the PU's reaction to the SU's collision  in each slot.

By using (\ref{objective}) and (\ref{eq: PU_throughput_constraint}), the OSA design under the LPUT constraint is formulated as
\begin{align}
\mathrm{(\textrm{P2})}:\mathop{\mathrm{max.}}_{\pi_{s},\pi_{\delta},\pi_{c}} &
~~E_{\pi_{s},\pi_{\delta},\pi_{c}}\big\{ \sum_{t=1}^{T}R_S(t)|\mathbf{\Lambda}(1)\big\}  \nonumber \\
\mathrm{s.t.}  &~~R_{P,n}^{o}\geq \Upsilon_n, ~~\forall n\in \mathbb{A}_{S}.   \nonumber
\end{align}

To our best knowledge, there is no existing work that addresses the {\it finite-horizon} {\it long-term constrained} POMDP problem (P2). As the problem has infinite and unaccountable action space, it is challenging to find the optimal policy $\{\pi_{s}^{*},\pi_{\delta}^{*},\pi_{c}^{*}\}$ for it. 
Note that for (P1), we have proposed a separation principle to design $\pi_{s}^{*}$ separately from $\pi_{\delta}^{*}$ and $\pi_{c}^{*}$, since the SCCP constraint in (P1) is only related to $\pi_{\delta}^{*}$, and $\pi_{c}^{*}$. However,  the LPUT constraint in (P2) is determined by all the three policies. Thus,  $\pi_{s}^{*}$ is generally \emph{not independent} of $\pi_{\delta}^{*}$ and $\pi_{c}^{*}$, which implies that (P2) cannot be solved optimally by the separation principle. However, a suboptimal policy for (P2) can be found based on the separation principle. In this section, we first focus on the single-channel case with $N=1$  for (P2), where only $\pi_{\delta}^{*}$ and $\pi_{c}^{*}$ need to be determined. We then propose the suboptimal policy for the general multi-channel case with $N>1$ by extending the results from $N=1$ to $N>1$ based on the separation principle.

\subsection{Single-Channel Case: Optimal OSA Policy Structure}
For (P2) in the single-channel case with $n=a$, we have
\begin{align}
\mathrm{(\textrm{P2-S})}:\mathop{\mathrm{max.}}_{\pi_{\delta},\pi_{c}} &
~~E_{\pi_{\delta},\pi_{c}}\big\{ \sum_{t=1}^{T}R_S(t)|\mathbf{\Lambda}(1)\big\}  \nonumber \\
\mathrm{s.t.}  &~~R_{P,a}^{o}\geq \Upsilon_a,   \nonumber
\end{align}
with action space
\begin{align*}
&\Big\{\big( (\epsilon_{a}(t),\! \delta_{a}(t) ),  (f_{\!a}(0,\!t),\!f_{\!a}(1,\!t) )\big):\nonumber \\
&~~~~~~~~~~~\big(\epsilon_{a}(t),\! \delta_{a}(t)\big)\!\in\!\mathbb{A}_{\delta}(a),\big(f_{\!a}(0,\!t),f_{\!a}(1,\!t)\big)\in[0,\!1]^2\Big\}.
\end{align*}
To simplify (P2-S), in this subsection,  we first propose an equivalent problem to (P2-S), namely, (P2-S-1), through which we  find  the optimal OSA policy structure. Based on the optimal OSA policy structure, we reduce (P2-S-1)  to another  problem (P2-S-2) with a significantly reduced action space.
To facilitate our analysis, we first present the following proposition.
\begin{proposition} \label{proposition: PU_throughput_constraint_with_equality}
For the case of $N\!=\!1$, given any SU's  OSA policy $\pi\!=\!\{\pi_{\delta},\pi_{c}\}$, with the resultant PU's normalized throughput  $R_{P,a}^{o}\!\!>\!\!\Upsilon_a$ and the SU's reward $E_{\pi}\!\Big\{\!\! \sum_{t=1}^{T}\!R_S(t)|\mathbf{\Lambda}(1)\!\Big\}$, we can find another policy $\pi^{'}\!=\!\Big\{\pi_{\delta}^{'},\pi_{c}^{'}\Big\}$,
with the resultant PU's normalized throughput $R_{P,a}^{o~'} = \Upsilon_a$ and  the SU's reward $E_{\pi^{'}}\big\{ \sum_{t=1}^{T}R_S(t)|\mathbf{\Lambda}(1)\big\}> E_{\pi}\big\{ \sum_{t=1}^{T}R_S(t)|\mathbf{\Lambda}(1)\big\}$.
\end{proposition}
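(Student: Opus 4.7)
The plan is to construct $\pi'$ by perturbing $\pi$ only in the terminal slot $T$, using the slack $R_{P,a}^{o}(\pi)-\Upsilon_a>0$ to increase the SU's slot-$T$ access aggressiveness, and then invoking the intermediate value theorem to pin $R_{P,a}^{o}$ to $\Upsilon_a$. Because $T$ is terminal, modifying $\pi$'s slot-$T$ action alters neither $\mathbf{\Lambda}(T)$ (already determined by the preceding history) nor any future reward, so the effect on total SU and PU throughput is captured entirely by the linear quantities $g_a(T)$ and $\mu_a(T)$ defined in (\ref{g_a}) and (\ref{mu_a}).

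Concretely, for $\eta\in[0,1]$ I would define $\pi_\eta$ to agree with $\pi$ in slots $1,\dots,T-1$ and to use in slot $T$ the modified access probability $f_a(0,T)=(1-\eta)f_a^{\pi}(0,T)+\eta$, with $f_a(1,T)$ and the sensor operating point $(\epsilon_a(T),\delta_a(T))$ unchanged from $\pi$. Then $g_a(T)$ and $\mu_a(T)$ are each linear and strictly increasing in $\eta$ (assuming $\epsilon_a^{\pi}(T)>0$ and $\delta_a^{\pi}(T)<1$, which can be enforced by an infinitesimal ROC shift if necessary), so $R_S(\pi_\eta)$ is continuous and strictly increasing in $\eta$ while $R_{P,a}^{o}(\pi_\eta)$ is continuous and strictly decreasing in $\eta$. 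Assuming $R_{P,a}^{o}(\pi_1)\leq\Upsilon_a$, the intermediate value theorem produces $\eta^{\star}\in(0,1]$ with $R_{P,a}^{o}(\pi_{\eta^{\star}})=\Upsilon_a$, and monotonicity yields $R_S(\pi_{\eta^{\star}})>R_S(\pi)$, so $\pi'=\pi_{\eta^{\star}}$ is the required policy.

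The main obstacle is the residual case where the terminal-slot perturbation alone cannot absorb all the slack, i.e., $R_{P,a}^{o}(\pi_1)>\Upsilon_a$. I would handle this by extending the perturbation backward through slots $T-1,T-2,\dots$ one slot at a time, fixing already-perturbed later slots at their maximally aggressive values and reducing each stage to a single-slot problem of the same flavor as the terminal analysis. In the extreme, the uniformly aggressive policy with $f_a(0,t)=f_a(1,t)=1$ for every $t$ is reached, which deterministically collides whenever the PU is busy and hence drives $R_{P,a}^{o}$ strictly below $\Upsilon_a$, guaranteeing that IVT eventually activates at some slot stage. The subtlety in this backward induction is that perturbing an earlier slot action reshapes the belief-state distribution at subsequent slots via (\ref{updatedBM1}) and so affects downstream rewards, but these downstream rewards remain continuous polynomial functions of the perturbation parameter, which is all that IVT requires; monotonicity at each backward step is then obtained by restricting the parameter range so that the direct slot-$t$ reward shift dominates the second-order belief-propagation effects in later slots.
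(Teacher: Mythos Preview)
Your terminal-slot idea matches the paper's: both keep $\pi$ unchanged on slots $1,\dots,T-1$ and modify only the slot-$T$ action. The difference is in \emph{what} you modify. The paper replaces the entire slot-$T$ action by the ``trust-the-sensor'' rule $f_a'(0,T)=0$, $f_a'(1,T)=1$ together with a new sensor operating point $\delta_a'(T)=1-\dfrac{\Upsilon_a T-\rho_a}{\lambda_{a0}(T)+\lambda_{a2}(T)}$ on the optimal ROC curve. Because this choice sets $\mu_a'(T)=\delta_a'(T)$, the slot-$T$ PU throughput becomes exactly $\Upsilon_aT-\rho_a$, so the LPUT equality is hit \emph{exactly and always} in the terminal slot; the paper then shows $\delta_a'(T)>\delta_a^{U}(T)$ (the upper-bound PM under the original slot-$T$ PU throughput), hence $\epsilon_a'(T)<\epsilon_a^{U}(T)$ and $g_a'(T)>g_a^{U}(T)\ge g_a(T)$, which gives the strict SU gain. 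No intermediate value theorem, and crucially no residual case.

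Your one-parameter family only moves $f_a(0,T)$ while freezing $(\epsilon_a(T),\delta_a(T))$ and $f_a(1,T)$. That gives you control over $\mu_a(T)$ only on the interval $[(1-\delta_a)f_a^\pi(0,T)+\delta_a f_a^\pi(1,T),\,(1-\delta_a)+\delta_a f_a^\pi(1,T)]$, which need not reach the value required to land on $\Upsilon_a$; hence you are forced into the backward-induction ``residual case.'' That is where the real gap lies: once you perturb a slot $t<T$, the belief-state distribution at all later slots changes via (\ref{updatedBM1}), and both $R_{P,a}^{o}$ and $R_S$ depend on those downstream beliefs in a non-trivial way. Your claim that one can ``restrict the parameter range so that the direct slot-$t$ reward shift dominates the second-order belief-propagation effects'' is not substantiated; continuity alone supports IVT for the PU constraint but does \emph{not} give the strict monotonicity you need for the SU improvement. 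The fix is simply to adopt the paper's move: allow the sensor operating point in slot $T$ to change as well, so that $\mu_a(T)$ can be set to any value in $[0,1]$ and the residual case disappears.
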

\begin{proof}
 Please refer to Appendix \ref{appendix:proof 3}.
\end{proof}

According to Proposition \ref{proposition: PU_throughput_constraint_with_equality}, the optimal policy $\pi^{*}\!=\!\{\pi_{\delta}^{*},\pi_{c}^{*}\}$ is selected to  ensure that $R_{P,a}^{o}\! =\! \Upsilon_a$. Thus,  (P2-S) is \emph{equivalent} to
\begin{align}
\mathrm{(\textrm{P2-S-1})}:\mathop{\mathrm{max.}}_{\pi_{\delta},\pi_{c}} &
~~E_{\pi_{\delta},\pi_{c}}\big\{ \sum_{t=1}^{T}R_S(t)|\mathbf{\Lambda}(1)\big\}  \nonumber \\
\mathrm{s.t.}  &~~R_{P,a}^{o}= \Upsilon_a.   \nonumber
\end{align}

\begin{proposition} \label{proposition: optimal structure}
The structure of the optimal policy $\pi^{*}=\{\pi_{\delta}^{*},\pi_{c}^{*}\}$ for (P2-S-1) is given as follows.
\begin{equation}
  \left\{
   \begin{array}{l}
   \!\!\delta_a^{*}(t)  ~\textrm{is in general time-variant and to be determined},\\
  \!\!\epsilon_a^{*}(t) ~\textrm{is on the optimal ROC curve corresponding}~  \\
  \!\!~~~~~~~\textrm{to} ~ \delta_a^{*}(t), \\
   \!\!f_a^{*}(0,t) = 0,  \\
   \!\!f_a^{*}(1,t) = 1.  
   \end{array}
  \right. \label{OptimalsolutionStructure}
\end{equation}
\end{proposition}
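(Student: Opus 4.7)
Plan. My proof of Proposition~\ref{proposition: optimal structure} proceeds in three steps, all downstream of Proposition~\ref{proposition: PU_throughput_constraint_with_equality}.

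First, I would collapse the four-dimensional per-slot action into two sufficient statistics. Inspecting the observation kernel (\ref{kinU}), the state transitions for $j\in\{0,2\}$ in Section~III, the SU's per-slot expected reward (\ref{SU_expect_R}), and the PU's per-slot expected throughput (\ref{PU_IMMR_exp}), one sees that the quadruple $(\epsilon_a(t),\delta_a(t),f_a(0,t),f_a(1,t))$ affects the dynamics and both rewards in (P2-S-1) \emph{only} through the pair $(g_a(t),\mu_a(t))$ defined in (\ref{g_a})--(\ref{mu_a}). The belief update (\ref{updatedBM1}), the SU objective, and the equality form of the LPUT constraint therefore become functionals of the trajectory $\{(g_a(t),\mu_a(t))\}_{t=1}^{T}$ alone.

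Second, I would characterize the achievable $(\mu,g)$ frontier per slot. For any target $\mu\in[0,1]$, fix $(\epsilon,\delta)$ with $\epsilon\geq\epsilon^{\mathrm{R}}(\delta)$, where $\epsilon^{\mathrm{R}}(\cdot)$ denotes the minimum PFA given a miss probability on the optimal ROC curve. Solving the linear program $\max\{\epsilon f_0+(1-\epsilon)f_1:(1-\delta)f_0+\delta f_1=\mu,\ (f_0,f_1)\in[0,1]^2\}$ with a case split on $\delta\lessgtr\mu$ gives $g\leq 1-\epsilon(1-\mu)/(1-\delta)$ when $\delta\leq\mu$ and $g\leq(1-\epsilon)\mu/\delta$ when $\delta>\mu$. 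Substituting $\epsilon=\epsilon^{\mathrm{R}}(\delta)$ on the ROC curve and invoking the concavity of the detection-rate curve $1-D(\epsilon)$ through the origin (which implies that both $\epsilon^{\mathrm{R}}(x)/(1-x)$ and $(1-\epsilon^{\mathrm{R}}(x))/x$ are monotone), both bounds collapse to $g\leq 1-\epsilon^{\mathrm{R}}(\mu)$, attained by $f_0=0$, $f_1=1$, $\delta=\mu$, and $\epsilon=\epsilon^{\mathrm{R}}(\mu)$.

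Third, I would show the optimum of (P2-S-1) actually lies on this frontier. Fix the $\mu_a$-trajectory; the PU state transitions---depending only on $\mu_a$---are then frozen, and varying $g_a(t)$ changes only the observation informativeness. The immediate reward $(\lambda_{a1}+\lambda_{a3})g_a(t)$ is linearly increasing in $g_a(t)$; and the observation channel $I(a,t)\mapsto K_a(t)$ at a larger $g_a(t)$ Blackwell-dominates that at a smaller $g_a(t)$, since any smaller-$g$ observation can be synthesized from a larger-$g$ one by flipping $K=1\to K=0$ with probability $1-g^{\mathrm{small}}/g^{\mathrm{large}}$. Standard dominance of more informative observations in POMDPs then yields that the cumulative SU reward is nondecreasing in each $g_a(t)$, so the optimum satisfies $g_a^{*}(t)=1-\epsilon^{\mathrm{R}}(\mu_a^{*}(t))$ for every $t$, which is precisely (\ref{OptimalsolutionStructure}) with $\delta_a^{*}(t)$ left as a free scalar to be resolved by the remaining optimization over the $\mu_a$-trajectory.

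The main obstacle is the monotonicity argument in the third step. Table~\ref{tab} shows that under the reactive PU model, $Q_t$ need not be monotone in $g_a(t)$ when the sensor is varied freely, because $g_a$ and $\mu_a$ are then coupled. The decoupling accomplished in the first step is essential: conditioning on a fixed $\mu_a$-trajectory freezes the PU dynamics completely, so that only the observation quality varies with $g_a(t)$, which is exactly the setting in which Blackwell's ordering applies cleanly.
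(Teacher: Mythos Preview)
Your argument is correct and takes a genuinely different route from the paper's own proof. The paper fixes the per-slot PU throughputs $c^{*}(t)$ attained at the optimum and rewrites (P2-S-1) as a problem (D-1) with the short-term equality constraints $\sigma_a(t)=1-c^{*}(t)/P\{I(a,t)=0\}$; it then invokes the backward-induction machinery of Appendix~A (Lemma~A.1 together with the linear form of $V_t$) to conclude that the per-slot optimizer must have $f_a^{*}(0,t)=0$, $f_a^{*}(1,t)=1$, and $(\epsilon_a^{*}(t),\delta_a^{*}(t))$ on the ROC curve. By contrast, you first isolate $(g_a(t),\mu_a(t))$ as sufficient statistics, then characterize the $(g,\mu)$-Pareto frontier directly via the ROC-curve convexity, and finally appeal to Blackwell dominance of the observation channel $I(a,t)\to K_a(t)$ in $g_a(t)$ to argue that the optimum must sit on that frontier. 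Your route bypasses the backward induction entirely and makes the structural reason for (\ref{OptimalsolutionStructure}) transparent: it is exactly the action that maximizes observation informativeness for a given collision level. The paper's route is shorter because it recycles Appendix~A, but it leaves implicit the adaptation of that appendix to a time- and belief-dependent equality constraint.

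One point in your Step~3 deserves tightening. ``Fix the $\mu_a$-trajectory'' is ambiguous in a POMDP, since $\mu_a^{*}(t)$ may itself depend on the realized belief $\mathbf{\Lambda}(t)$, which changes once you alter $g_a(t)$. The clean version of your Blackwell step is the interchange argument you already sketch: at any $(t,\mathbf{\Lambda}(t))$ where the action is off the frontier, replace it by the frontier action with the same $\mu$, and thereafter apply the \emph{original} policy to the belief computed from the garbled observation $\tilde K_a(t)$ (flip $K_a(t)=1\to 0$ with probability $1-g^{\mathrm{small}}/g^{\mathrm{large}}$). Because $\tilde K_a(t)$ has the same conditional law given the PU state as the original observation, the joint distribution of future PU states and applied $\mu_a$-values is preserved, so the LPUT equality is maintained and future SU rewards are unchanged in law, while the immediate SU reward strictly increases. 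Stating it this way closes the gap between ``more informative observations cannot hurt the \emph{unconstrained} value'' and what you need here, namely that they cannot hurt while keeping the PU-throughput constraint satisfied. Finally, note that the paper separately argues (via Appendix~B) that $\delta_a^{*}(t)$ cannot be constant in $t$; your proposal leaves this as ``to be determined,'' which matches the proposition's wording but does not reproduce that additional remark.
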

\begin{proof}
Please refer to Appendix \ref{appendix:proof 4}.
\end{proof}

\begin{remark}
As shown in (\ref{OptimalsolutionStructure}), the optimal spectrum access policy $\pi_{c}^{*}$ for (P2-S-1) is the same as that for (P1), and the optimal sensor operating point $\big(\delta_a^{*}(t),\epsilon_a^{*}(t)\big)$ for (P2-S-1) is on the optimal ROC curve as that for (P1). However, different from the time-invariant case for (P1), the optimal PM decision for (P2-S-1) is in general time-varying. As proved in Appendix \ref{appendix:proof 4}, $\delta_a^{*}(t)$ for (P2-S-1) is related to the current belief state $\mathbf{\Lambda}(t)$ and thus needs to be determined adaptively over time.
This indicates  that the spectrum sensor design plays a crucial role in protecting  reactive PU's transmission  under  LPUT constraint.
\end{remark}

By applying (\ref{eq: PU_normal_Throughput}) and (\ref{OptimalsolutionStructure}) and without loss of optimality, (P2-S-1) is  reduced to
\begin{align}
\mathrm{(\textrm{P2-S-2})}:\mathop{\mathrm{max.}}_{\pi_{\delta}} &
~~E_{\pi_{\delta}}\big\{ \sum_{t=1}^{T}R_S(t)|\mathbf{\Lambda}(1),\pi_c^{*}\big\}  \nonumber \\
\mathrm{s.t.}  &~~\frac{1}{T}E_{\pi_{\delta}}\big\{ \sum_{t=1}^{T}R_{P,a}(t)|\mathbf{\Lambda}(1),\pi_c^{*}\big\}= \Upsilon_a,   \nonumber
\end{align}
where $\big(\delta_a(t),\epsilon_a(t)\big)$ determined by $\pi_{\delta}$ is on the optimal ROC curve. Thus, to find the optimal policy $\pi_{\delta}^{*}$ for (P2-S-2), we only need to search the action space of $\big\{\delta_a(t):\delta_a(t)\!\in \![0,1]\big\}$, which is  greatly reduced over that of (P2-S-1).

Since (P2-S-2) is reduced from (P2-S-1) and (P2-S-1) is equivalent to (P2-S), substituting the optimal $\pi_{\delta}^{*}$ for (P2-S-2) into (\ref{OptimalsolutionStructure}) yields the optimal OSA policy for (P2-S-1) and thus (P2-S). Hence, in the following, we focus on solving (P2-S-2).

However, finding $\pi_{\delta}^{*}$  for (P2-S-2) is of high complexity, mainly due to the following two reasons: 1) the infinite and unaccountable action space of (P2-S-2), and 2) the non-deterministic POMDP belief state transitions. As the complexity due to the first reason is obvious, here we explain the complexity due to the second reason. From (\ref{ValueF}) and (\ref{VFT}), to maximize the SU's throughput in (P2-S-2), i.e., to find $V_1(\mathbf{\Lambda}(1))$ under the LPUT constraint,
we need to obtain $V_t(\mathbf{\Lambda}(t))$  for all $t\in \{2,\ldots,T\}$. As shown in (\ref{updatedBM1}), given $\mathbf{\Lambda}(t)$   and the SU's OSA actions  in slot $t$,  2 possible belief states in slot $t+1$ exist with non-zero probability, corresponding to the 2 possible observations, respectively. Thus, for the case of $N=1$, given the initial belief state $\mathbf{\Lambda}(1)$ and the SU's OSA policies $\pi_{\delta}$ and $\pi_c^{*}$, the  complexity of computing $V_1(\mathbf{\Lambda}(1))$ is $\mathcal{O}\big(2^T\big)$, which is not scalable with $T$.

In the following, we focus on designing a suboptimal policy for (P2-S-2) that can meet the LPUT constraint.
We use the method given in Appendix B to calculate the PU's normalized throughput, which is similar to  the SU's throughput calculation in (\ref{Q_t}) and (\ref{Q_T}). However, also due to the non-deterministic POMDP belief state transitions,  it is of exponentially increased complexity over time to find a policy that can meet the LPUT constraint. Motivated by \cite{POMDP.Survey}, we note that the complexity  due to the non-deterministic POMDP belief state transitions is  reducible,  by converting the POMDP into an equivalent MDP with deterministic state transitions. In the following subsections, we first construct the equivalent MDP, and then based on the MDP, we propose a suboptimal policy for (P2-S-2), which satisfies the LPUT constraint in (P2-S-2).

\subsection{Equivalent MDP with Deterministic State Transitions}

In this subsection, we  first convert the POMDP for (P2-S-2) into an  MDP with deterministic state transitions,  and reformulate  the LPUT constraint in  (P2-S-2) based on the MDP. We then show that  if a   policy   satisfies the MDP-based LPUT constraint, it will also satisfy the POMDP-based counterpart in (P2-S-2).

An MDP in general consists of the following elements \cite{MDP1994}: a set of time-slots $\{1,\ldots,T\}$,  a set of system states (with transition probabilities), actions and rewards, for each of the time-slots. In the following, we formulate the MDP model for the SU's OSA by specifying these elements  according to \cite{POMDP.Survey}.
Specifically, for the single-channel case with $n=a$,  the MDP state in slot $t$, $1\leq t \leq T$, is denoted by a 4-element vector $\mathbf{\Omega}(t)=\big\{\omega_{ai}(t)\big\}_{i\in \mathbb{C}_{S}}$, where $\omega_{ai}(t)\in[0,1]$ is the conditional probability that the reactive PU is at the $i$-th state  on channel $a$ in slot $t$,  given the SU's action history. Note that the MDP state space, given by $[0,1]^4$,  is the same as the POMDP belief state space, given in Section III-B, for $N=1$.
We assume that $\mathbf{\Omega}(1)=\mathbf{\Lambda}(1)$ in the initial slot $t=1$.
Based on the current MDP state $\mathbf{\Lambda}(t)$ in slot $t$,  the SU selects an action $A(t)=\big((\epsilon_{a,\mathcal{M}}(t), \delta_{a,\mathcal{M}}(t)), (f_{a,\mathcal{M}}(0,t),f_{a,\mathcal{M}}(1,t))\big)$ for OSA, where  $(\epsilon_{a,\mathcal{M}}(t), \delta_{a,\mathcal{M}}(t))\in \mathbb{A}_{\delta}(a)$ is the sensor operating point and $(f_{a,\mathcal{M}}(0,t),f_{a,\mathcal{M}}(1,t))\in [0,1]^2$ is the channel access probability. Thus, the MDP action space  is  the same as  the POMDP action space for $N=1$. We  then follow the  optimal OSA policy structure in (\ref{OptimalsolutionStructure}) and set
\begin{align}
  &\left\{
   \begin{array}{l}
   \!\!(\epsilon_{a,\mathcal{M}}(t), \delta_{a,\mathcal{M}}(t)) ~\textrm{locates on the optimal ROC curve}, \\
   \!\!f_{a,\mathcal{M}}(0,t)=0,  \\
   \!\!f_{a,\mathcal{M}}(1,t)=1,  
   \end{array}
  \right. \nonumber \\
  &~~~~\forall t\in \{1,\ldots,T\}, \label{MDP_action}
\end{align}
where in slot $t$, the SU only needs to determine the PM action $\delta_{a,\mathcal{M}}(t)$.
Denote $P_a\big(\mathbf{\Omega}(t+1)|\mathbf{\Omega}(t),\delta_{a,\mathcal{M}}(t)\big)$ as the MDP state transition probability from state $\mathbf{\Omega}(t)\!=\!\big\{\omega_{ai}(t)\big\}_{i\in \mathbb{C}_{S}}$ in slot $t$ to state $\mathbf{\Omega}(t+1)\!=\!\big\{\omega_{ai}(t+1)\big\}_{i\in \mathbb{C}_{S}}$ in slot $t+1$ on channel $a$,  given the SU's selected PM action $\delta_{a,\mathcal{M}}(t)$ in slot $t$. From Fig.~3(b),  $\omega_{ai}(t+1)\!=\!\sum_{j=0}^{3}\omega_{aj}(t)P_a\big(i|j,\delta_{a,\mathcal{M}}(t)\big)$,  $i$, $j\!\in\! \mathbb{C}_{S}$,  where $P_a\big(i|j,\delta_{a,\mathcal{M}}(t)\big)$ is the state transition probability $P_a\big(i|j,A\big)$ given in Fig.~3(b), with  $A$ reduced to $\delta_{a,\mathcal{M}}(t)$ by applying (\ref{MDP_action}). Thus, we obtain the MDP state transition probability as
\begin{align}
   &P_a\big(\mathbf{\Omega}(t+1)|\mathbf{\Omega}(t),\delta_{a,\mathcal{M}}(t)\big)\nonumber \\
   &\!=\left\{
   \begin{array}{l}
   \!\!1,\textrm{if}~ \omega_{ai}(t\!+\!1)\!\!=\!\!\sum_{j=0}^{3}\omega_{aj}(t)P_a\big(i|j,\delta_{a,\!\mathcal{M}}(t)\big), \forall i\!\in \! \mathbb{C}_{S}.  \\
   \!\!0,\textrm{otherwise}.
   \end{array}  \label{MDP_TP}
  \right.
\end{align}
From (\ref{MDP_TP}),  the MDP state transition is  deterministic. That is, given  $\delta_{a,\mathcal{M}}(t)$ selected at MDP state $\mathbf{\Omega}(t)$ in slot $t$, there is only one possible MDP state $\mathbf{\Omega}(t+1)$ in  slot $t+1$.

Denote the PU's  throughput in slot $t$ on channel $a$ by $R_{P,a}^{\mathcal{M}}(t)$, which is
\begin{equation}
 R_{P,a}^{\mathcal{M}}(t)=\big(\omega_{a0}(t)+\omega_{a2}(t)\big)\times (1-\delta_{a,\mathcal{M}}(t)).  \label{PU_MDP_ImmR}
\end{equation}
The PU's normalized throughput on channel $a$ over $T$ slots is thus given by $\frac{1}{T}\sum_{t=1}^{T}R_{P,a}^{\mathcal{M}}(t)$. With the benchmark throughput $\Upsilon_a$, given in (\ref{reference_throughput}), the MDP-based LPUT constraint   is formulated as
\begin{equation}
\frac{1}{T}\sum_{t=1}^{T}R_{P,a}^{\mathcal{M}}(t)=\Upsilon_a. \label{LPUT_constraint_MDP}
\end{equation}

Note that due to the deterministic  MDP state transitions,  with a complexity analysis similar to that in Section~V-A, it is easy to find that the complexity in computing $\sum_{t=1}^{T}R_{P,a}^{\mathcal{M}}(t)/T$ under the MDP policy is $\mathcal{O}(T)$, which is substantially reduced as compared to that based on the POMDP.

\begin{proposition} \label{proposition: optimal transformed MDP}

Given an MDP policy  $\pi_{\delta,\mathcal{M}}^{\star}$, which specifies a PM decision  $\delta_{a,\mathcal{M}}^{\star}(t)$ in slot $t$, $1\leq t \leq T$, we  construct a POMDP policy $\pi_{\delta}^{\star}$ for (P2-S-2), where the corresponding PM decision in each slot $t$  is $\delta_a^{\star}(t)=\delta_{a,\mathcal{M}}^{\star}(t)$. If the MDP-based LPUT constraint in (\ref{LPUT_constraint_MDP}) is satisfied under  $\pi_{\delta,\mathcal{M}}^{\star}$,  the POMDP-based LPUT constraint in (P2-S-2) is also satisfied under   $\pi_{\delta}^{\star}$.
\end{proposition}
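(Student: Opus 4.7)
The plan is to show that the open-loop sequence of PM decisions prescribed by the MDP policy produces, when applied in the POMDP, the \emph{same expected} per-slot PU throughput as in the MDP, so that the POMDP long-term constraint reduces to the MDP one. Because the MDP state $\mathbf{\Omega}(t)$ evolves deterministically by (\ref{MDP_TP}) and the policy $\pi_{\delta,\mathcal{M}}^{\star}$ is defined on MDP states, the induced sequence $\{\delta_{a,\mathcal{M}}^{\star}(t)\}_{t=1}^{T}$ is effectively open-loop; hence the constructed $\delta_a^{\star}(t)=\delta_{a,\mathcal{M}}^{\star}(t)$ is a deterministic (non-random) schedule when used in the POMDP, although $\mathbf{\Lambda}(t)$ itself remains random there.

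The central lemma I would establish by induction on $t$ is
\begin{equation*}
E_{\pi_\delta^{\star},\pi_c^{\star}}\{\lambda_{ai}(t)\}=\omega_{ai}(t),\quad \forall i\in\mathbb{C}_S,\ 1\le t\le T.
\end{equation*}
The base case $t=1$ is immediate from $\mathbf{\Omega}(1)=\mathbf{\Lambda}(1)$. For the inductive step, I would use the Bayes update (\ref{updatedBM1}) together with the fact that the marginal probability of observing $K_a(t)=k$ given $\mathbf{\Lambda}(t)$ is exactly the denominator $\sum_i\lambda_{ai}(t)U_A(k|i)$. Averaging (\ref{updatedBM1}) over $k\in\{0,1\}$ cancels that denominator and gives
\begin{equation*}
E\{\lambda_{aj}(t+1)\mid\mathbf{\Lambda}(t)\}=\sum_{i\in\mathbb{C}_S}\lambda_{ai}(t)P_a(j|i,A),
\end{equation*}
because $\sum_k U_A(k|i)=1$. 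This one-step expected update is precisely the deterministic MDP update in (\ref{MDP_TP}). Taking outer expectation and applying the induction hypothesis closes the induction.

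Next I would invoke the optimal OSA policy structure of Proposition~\ref{proposition: optimal structure}: with $f_a^{\star}(0,t)=0$ and $f_a^{\star}(1,t)=1$, the expression (\ref{colconstraint}) yields $\sigma_a(t)=\delta_a^{\star}(t)$, and (\ref{PU_IMMR_exp}) becomes
\begin{equation*}
E\{R_{P,a}(t)\mid\mathbf{\Lambda}(t)\}=(\lambda_{a0}(t)+\lambda_{a2}(t))(1-\delta_a^{\star}(t)).
\end{equation*}
Because $\delta_a^{\star}(t)=\delta_{a,\mathcal{M}}^{\star}(t)$ is a deterministic constant in slot $t$, taking the outer expectation and applying the lemma gives $E\{R_{P,a}(t)\}=(\omega_{a0}(t)+\omega_{a2}(t))(1-\delta_{a,\mathcal{M}}^{\star}(t))=R_{P,a}^{\mathcal{M}}(t)$ by (\ref{PU_MDP_ImmR}). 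Summing over $t$ and dividing by $T$ then shows $R_{P,a}^{o}=\tfrac{1}{T}\sum_{t=1}^{T}R_{P,a}^{\mathcal{M}}(t)$, so the MDP equality in (\ref{LPUT_constraint_MDP}) implies $R_{P,a}^{o}=\Upsilon_a\ge\Upsilon_a$, which is exactly the POMDP-based LPUT constraint of (P2-S-2).

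The main obstacle is the inductive lemma on the expected belief: I must argue carefully that the randomness in $\lambda_{ai}(t+1)$ stems entirely from the observation $K_a(t)$ and that the PM action in slot $t$ is deterministic (since $\pi_\delta^{\star}$ is constructed as an open-loop copy of $\pi_{\delta,\mathcal{M}}^{\star}$), for otherwise an adaptive $\delta_a^{\star}(t)$ depending on $\mathbf{\Lambda}(t)$ would create a correlation between $\delta_a^{\star}(t)$ and $(\lambda_{a0}(t)+\lambda_{a2}(t))$ that breaks the factorization in the final step. Emphasizing that the MDP states $\{\mathbf{\Omega}(t)\}$ are pre-computable and thus $\{\delta_{a,\mathcal{M}}^{\star}(t)\}$ is an a-priori known schedule, which is then transplanted slot-by-slot into the POMDP irrespective of observations, is what makes the argument clean.
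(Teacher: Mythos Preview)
Your proposal is correct and takes essentially the same approach as the paper: the paper's Lemma~\ref{lemma: state relationship}, $\omega_{aj}(t)=\sum_{b} h^b(t)\lambda_{aj}^b(t)$, is exactly your inductive lemma $E\{\lambda_{aj}(t)\}=\omega_{aj}(t)$ written as a weighted sum over observation histories, and both proofs finish by plugging this into the per-slot PU throughput formula. Your explicit emphasis that $\{\delta_{a,\mathcal{M}}^{\star}(t)\}$ is an open-loop schedule (so that $P_a(j|i,A)$ is deterministic and the linearity in the outer expectation step is legitimate) is a helpful clarification that the paper leaves implicit.
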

\begin{proof}
Please refer to Appendix \ref{appendix:proof 5}.
\end{proof}

\subsection{Suboptimal  Policy}
In this subsection, by studying the MDP-based  LPUT constraint, we  derive a suboptimal policy for (P2-S), such that  the LPUT constraint in (P2-S) is satisfied. In the following, we first give a sufficient condition for satisfying the MDP-based LPUT constraint, based on which, we propose an MDP-based policy $\pi_{\delta,\mathcal{M}}^{\star}$, which can satisfy the MDP-based LPUT constraint. Based on $\pi_{\delta,\mathcal{M}}^{\star}$ and Proposition \ref{proposition: optimal transformed MDP}, we  then obtain a suboptimal policy for (P2-S-2), such that the LPUT constraint in (P2-S-2) is satisfied.  Finally, we obtain the suboptimal policy for (P2-S) by substituting the suboptimal policy for (P2-S-2) into the optimal OSA policy structure in (\ref{OptimalsolutionStructure}).

\subsubsection{A sufficient condition for satisfying MDP-based LPUT constraint }
From  (\ref{LPUT_constraint_MDP}), to satisfy the LPUT constraint, we take   $\Upsilon_a\times T$ as the PU's throughput requirement over all $T$ slots in the MDP.  We then denote $X_{P,a}(t)$ as the PU's throughput requirement from slot $t$ to slot $T$ in the MDP and have
\begin{align}
X_{P,a}(1)&\!\triangleq \! \Upsilon_a \! \times \! T,~t\!=\!1, \label{eq: X_P_1} \\
X_{P,a}(t)&\! \triangleq \! X_{P,a}(t\!-\!1)\!-\!R_{P,a}^{\mathcal{M}}(t\!-\!1), ~\forall t \!\in \! \{2,\ldots,T\}. \label{eq: X_P_t}
\end{align}
Given the PU's obtained throughput $R_{P,a}^{\mathcal{M}}(t-1)$ in slot $t-1$, from (\ref{eq: X_P_t}) and by calculating backward in time, we observe that if $X_{P,a}(t)$ in slot $t$ is satisfied, $X_{P,a}(t-1)$ in slot $t-1$ is achieved.
Thus, we can easily show that if $X_{P,a}(T)$ in the last slot $T$ is satisfied, the PU's throughput requirement $X_{P,a}(1)$ is achieved, i.e.,  the LPUT constraint given in (\ref{LPUT_constraint_MDP})  is met.
Note that $X_{P,a}(T)$ can be satisfied by selecting    $\delta_{a,\mathcal{M}}(T)$  such that $X_{P,a}(T)=(\omega_{a0}(T)+\omega_{a2}(T))\times (1-\delta_{a,\mathcal{M}}(T))$.
Since  $\delta_{a,\mathcal{M}}(T)\in [0,1]$,  the following  inequality is obtained as a \emph{sufficient condition} for satisfying (\ref{LPUT_constraint_MDP}):
\begin{equation}
 0\leq X_{P,a}(T)\leq \omega_{a0}(T)+\omega_{a2}(T). \label{X_T}
\end{equation}

\subsubsection{MDP-based policy $\pi_{\delta,\mathcal{M}}^{\star}$}
The MDP-based policy $\pi_{\delta,\mathcal{M}}^{\star}$ is given by the  PM actions $\{\delta_{a,\mathcal{M}}(1),..., \delta_{a,\mathcal{M}}(T)\}$. In the following, we derive the minimum required PM, denoted by $\delta_{a,\mathcal{M}}^L(t)$, and the maximum allowable PM, denoted by $\delta_{a,\mathcal{M}}^U(t)$, in slot $t$, such that  (\ref{X_T}) is satisfied if the SU selects $\delta_{a,\mathcal{M}}(t)\in [\delta_{a,\mathcal{M}}^L(t),\delta_{a,\mathcal{M}}^U(t)]$,  $\forall t\in\{1,\ldots,T\}$.

From (\ref{eq: X_P_1}) and (\ref{eq: X_P_t}), to ensure $X_{P,a}(T)\!\geq \! 0$,  we need $X_{P,a}(t)\! \geq \! 0$ in  all the previous  slots with  $ t \!\in \! \{1,\ldots,T\!-\!1\}$.
By substituting $X_{P,a}(t)\! \geq \! 0$ to (\ref{eq: X_P_t}) and using (\ref{PU_MDP_ImmR}), we obtain $\delta_{a,\mathcal{M}}(t) \! \geq \! 1\!-\!\frac{X_{P,a}(t)}{\omega_{a0}(t)\!+\!\omega_{a2}(t)}$, $1\!\leq \! t \!\leq \!T\!-\!1$.
When $t\!=\!T$, if $X_{P,a}(T)\leq \omega_{a0}(T)\!+\!\omega_{a2}(T)$ is satisfied, we only need to set $\delta_{a,\mathcal{M}}(T)\!=\! 1\!-\!\frac{X_{P,a}(T)}{\omega_{a0}(T)\!+\!\omega_{a2}(T)}$ to satisfy  $X_{P,a}(T)$. Thus, we obtain
\begin{equation}
\delta_{a,\mathcal{M}}^{L}(t)= \max\Big(0,1-\frac{X_{P,a}(t)}{\omega_{a0}(t)+\omega_{a2}(t)}\Big),~1\leq t \leq T.
\end{equation}

\begin{proposition} \label{proposition: Upper bound}
To ensure  $X_{P,a}(T) \leq \omega_{a0}(T)+\omega_{a2}(T)$,  the SU's PM  $\delta_{a,\mathcal{M}}(t)$ selected in slot $t$ needs to satisfy the following inequality:
\begin{align}
\!\!\!\!\!\!\delta_{a,\mathcal{M}}(t) \leq& \frac{\omega_{a1}(t)\times m_2(t)+\omega_{a3}(t)\times m_3(t)-X_{P,a}(t)}{(\omega_{a0}(t)+\omega_{a2}(t))\times m_4(t)} \nonumber \\
& +\frac{m_1(t)}{m_4(t)}\label{upper}
\end{align}
where
\begin{equation}
  \left\{
   \begin{array}{l}
   \!\!m_1(t)\!=\!1\!+\!(1\!-\!\alpha_0^a)\!\times\! m_1(t\!+\!1)\!+\!\alpha_0^a\!\times\! m_2(t\!+\!1),  \\
   \!\!m_2(t)\!=\!(1\!-\!\beta_0^a)\!\times\! m_1(t\!+\!1)\!+\!\beta_0^a\!\times\! m_2(t\!+\!1),  \\
   \!\!m_3(t)\!=\!(1\!-\!\beta_1^a)\!\times\! m_1(t\!+\!1)\!+\!\beta_1^a\!\times\! m_3(t\!+\!1), \\
   \!\!m_4(t)\!=\!1\!+\!(\alpha_1^a\!-\!\alpha_0^a)\!\times\! m_1(t\!+\!1)\!+\!\alpha_0^a\!\times \!m_2(t\!+\!1)\!\\
   \!\!~~~~~~~~~~-\!\alpha_1^a\!\times\! m_3(t\!+\!1),
   \end{array}
  \right.   \label{para_not_last_T}
\end{equation}
for $1\leq t<T$ and
\begin{equation}
  \left\{
   \begin{array}{l}
   m_1(T)=1,  \\
   m_2(T)=0,  \\
   m_3(T)=0, \\
   m_4(T)=1,
   \end{array}
  \right.  \label{para_last_T}
\end{equation}
for $t=T$.
\end{proposition}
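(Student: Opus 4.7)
The plan is to derive (\ref{upper}) by propagating the terminal feasibility requirement $X_{P,a}(T)\leq\omega_{a0}(T)+\omega_{a2}(T)$ back to slot $t$ under the ``most PU-friendly'' continuation $\delta_{a,\mathcal{M}}(t')=0$ for every $t'>t$. This is the correct relaxation for an \emph{upper} bound on $\delta_{a,\mathcal{M}}(t)$: since $R_{P,a}^{\mathcal{M}}(t')=(\omega_{a0}(t')+\omega_{a2}(t'))(1-\delta_{a,\mathcal{M}}(t'))$, the choice $\delta_{a,\mathcal{M}}(t')=0$ maximizes the future PU throughput and hence minimizes $X_{P,a}(T)$, so any bound that is necessary under this continuation is also necessary overall.

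Writing $S_t:=\omega_{a0}(t)+\omega_{a2}(t)$, the first step is to telescope $X_{P,a}(t'+1)=X_{P,a}(t')-S_{t'}$ for $t'=t+1,\ldots,T-1$ (under $\delta=0$) and reduce $X_{P,a}(T)\leq S_T$ to the slot-$(t+1)$ condition $X_{P,a}(t+1)\leq\sum_{t'=t+1}^{T}S_{t'}$. Next I identify this right-hand side with the $m_i(t+1)$'s. Under $\delta=0$, states $0$ and $2$ have identical one-step transition probabilities (both send mass to state $0$ with $1-\alpha_0^a$ and to state $1$ with $\alpha_0^a$), so they have identical future value, which lets me write
\[\sum_{t'=t+1}^{T}S_{t'}=S_{t+1}\,m_1(t+1)+\omega_{a1}(t+1)\,m_2(t+1)+\omega_{a3}(t+1)\,m_3(t+1),\]
with $m_1,m_2,m_3$ obeying exactly the backward recursions in (\ref{para_not_last_T})--(\ref{para_last_T}), which are simply the dynamic-programming equations for the expected remaining $S$ started deterministically at states $0$, $1$, $3$.

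The main computation is to substitute the one-step MDP dynamics at slot $t$ into this slot-$(t+1)$ inequality: using the recursions for $S_{t+1}$, $\omega_{a1}(t+1)$, $\omega_{a3}(t+1)$ induced by Fig.~3(b) together with the MDP actions in (\ref{MDP_action}), and the budget update $X_{P,a}(t+1)=X_{P,a}(t)-S_t(1-\delta_{a,\mathcal{M}}(t))$, I collect in powers of $\delta_{a,\mathcal{M}}(t)$. The $\delta$-free terms collapse, via the $m_1,m_2,m_3$ recursions in (\ref{para_not_last_T}), into $S_t\,m_1(t)+\omega_{a1}(t)\,m_2(t)+\omega_{a3}(t)\,m_3(t)-X_{P,a}(t)$, and the coefficient of $\delta_{a,\mathcal{M}}(t)$ aggregates into $S_t\bigl[1+(\alpha_1^a-\alpha_0^a)m_1(t+1)+\alpha_0^a m_2(t+1)-\alpha_1^a m_3(t+1)\bigr]=S_t\,m_4(t)$. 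Dividing through by $S_t\,m_4(t)$ (both strictly positive, by inspection of the recursions) gives (\ref{upper}). The base case $t=T$ is a consistency check: (\ref{para_last_T}) forces $m_1(T)=m_4(T)=1$ and $m_2(T)=m_3(T)=0$, and (\ref{upper}) reduces to $\delta_{a,\mathcal{M}}(T)\leq 1-X_{P,a}(T)/S_T$, which is exactly the rearrangement of $X_{P,a}(T)\leq S_T$.

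The main obstacle is the bookkeeping in the collection step: the $\delta_{a,\mathcal{M}}(t)$ dependence is distributed across $S_{t+1}$, $\omega_{a1}(t+1)$, $\omega_{a3}(t+1)$, and the budget update, and verifying that these four affine contributions aggregate into the specific combination defining $m_4(t)$ in (\ref{para_not_last_T}) requires careful sign tracking. The one non-obvious algebraic fact that makes the $\delta$-free residue collapse cleanly---so that (\ref{upper}) involves only $m_1,m_2,m_3,m_4$ and not a fifth $m$-quantity for state~$2$---is that states $0$ and $2$ are indistinguishable under the MDP transition probabilities, so $\omega_{a0}(t)$ and $\omega_{a2}(t)$ multiply the same future value $m_1(t)$; this is what forces the right-hand side of (\ref{upper}) to contain only the sum $\omega_{a0}(t)+\omega_{a2}(t)$.
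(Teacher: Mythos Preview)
Your proof is correct and follows essentially the same backward-induction route as the paper: both arguments reduce the terminal condition to the intermediate inequality $X_{P,a}(t+1)\leq S_{t+1}m_1(t+1)+\omega_{a1}(t+1)m_2(t+1)+\omega_{a3}(t+1)m_3(t+1)$ and then substitute the slot-$t$ MDP dynamics and budget update to recover (\ref{upper}). Your framing in terms of the $\delta=0$ continuation gives a cleaner semantic interpretation of the $m_i$'s (as expected remaining busy-mass started from each state) than the paper's bare induction, but the computation is the same.
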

\begin{proof}
Please refer to  Appendix~\ref{appendix:proof 6}.
\end{proof}

With Proposition \ref{proposition: Upper bound}, the SU's maximum allowable PM $\delta_{a,\mathcal{M}}^{U}(t)$ is obtained as
\begin{align}
&\!\!\delta_{a,\mathcal{M}}^{U}(t)=\min\bigg(1,\nonumber \\
&\frac{\omega_{a1}(t)\times m_2(t)+\omega_{a3}(t)\times m_3(t)-X_{P,a}(t)}{(\omega_{a0}(t)+\omega_{a2}(t))\times m_4(t)}+\frac{m_1(t)}{m_4(t)}\bigg).
\end{align}

Clearly, if  $\delta_{a,\mathcal{M}}(t) \in [\delta_{a,\mathcal{M}}^L(t),\delta_{a,\mathcal{M}}^U(t)]$  in slot $t$ is selected, (\ref{X_T}) is guaranteed and thus the MDP-based LPUT constraint given in (\ref{LPUT_constraint_MDP}) is satisfied.
We then propose the MDP-based policy $\pi_{\delta,\mathcal{M}}^{\star}$ by specifying
\begin{equation}
\delta_{a,\mathcal{M}}^{\star}(t)=\delta_{a,\mathcal{M}}^{L}(t)+\psi(t)\times (\delta_{a,\mathcal{M}}^{U}(t)-\delta_{a,\mathcal{M}}^{L}(t)),
\end{equation}
where $\delta_{a,\mathcal{M}}^{\star}(t)\in [\delta_{a,\mathcal{M}}^L(t),\delta_{a,\mathcal{M}}^U(t)]$ is guaranteed by selecting $\psi(t)\in [0,1]$ in slot $t$.

\subsubsection{POMDP-based suboptimal policy}

We first consider the POMDP-based problem (P2-S-2). From Proposition \ref{proposition: optimal transformed MDP}, by setting $\delta_a^{\star}(t)=\delta_{a,\mathcal{M}}^{\star}(t)$, $1\leq t \leq T$, we find a suboptimal policy $\pi_{\delta}^{\star}$ for (P2-S-2),  which  satisfies the LPUT constraint in  (P2-S-2).

Next, we consider the original POMDP-based problem (P2-S). Note that (P2-S-2) is reduced from (P2-S-1) without loss of optimality and  (P2-S-1) is equivalent to (P2-S). Thus, by substituting the suboptimal spectrum sensor operating policy $\pi_{\delta}^{\star}$ into the optimal OSA policy structure in (\ref{OptimalsolutionStructure}),  we  obtain a {\it suboptimal OSA policy} for both (P2-S-1) and (P2-S) as
\begin{equation}
  \left\{
   \begin{array}{l}
  \!\! \delta_a^{\star}(t)=\delta_{a,\mathcal{M}}^{L}(t)+\psi(t)\times (\delta_{a,\mathcal{M}}^{U}(t)-\delta_{a,\mathcal{M}}^{L}(t)), \\
  \!\! ~~~~~~~~~\textrm{~where~}\psi(t)\in[0,1],\\
  \!\!\epsilon_a^{\star}(t) ~\textrm{is on the best ROC curve}~\textrm{corresponding to} ~\delta_a^{\star}(t),\\
   \!\!f_a^{*}(0,t) = 0,  \\
   \!\!f_a^{*}(1,t) = 1,  
   \end{array}
  \right.  \label{pi_1}
\end{equation}
Since $\pi_{\delta}^{\star}$ satisfies the LPUT constraint in (P2-S-2),  the suboptimal OSA policy in (\ref{pi_1}) satisfies the LPUT constraint in (P2-S-1) and thus (P2-S) with equality.

\subsection{Multi-Channel Case}
At last, we  consider  the general case with $N>1$ for (P2). In this case, although the spectrum sensing policy $\pi_s$ generally  depends on the sensor operating policy $\pi_{\delta}$ and the spectrum access policy $\pi_c$, a suboptimal policy for (P2) can be obtained by separately designing $\pi_s$ from  $\pi_{\delta}$ and $\pi_c$, i.e., applying a separation principle. Based on the results for $N=1$, a suboptimal policy for $N>1$ is proposed as follows.
\begin{itemize}
 \item Step 1: On each channel $n\in\mathbb{A}_S$, the SU selects the sensor operating point and the spectrum access probabilities in slot $t$ according to  $\pi_{\delta}^{\star}$ and $\pi_{c}^{*}$, as  given in (\ref{pi_1}).
\item Step 2:  Apply $\pi_{\delta}^{\star}$ and $\pi_{c}^{*}$ to obtain the SU's sensing policy $\pi^{\star}_{s}$, which determines the channel to be sensed in slot $t$, by solving the following unconstrained POMDP:
 \begin{equation}
  \pi^{\star}_{s}=\mathrm{arg}\mathop{\max}_{\pi_{s}} E_{\pi_{s}}\Big\{\sum_{t=1}^{T} R_S(t)|\mathbf{\Lambda}(1),\pi_{\delta}^{\star}, \pi_c^{*}\Big\}. \label{pi_s_LTUCP}
 \end{equation}
\end{itemize}

The sensor operating policy $\pi_{\delta}^{\star}$ and the spectrum access policy $\pi_{c}^{*}$ in (\ref{pi_1}) are provided such that the PU's  normalized throughput  on channel $n$, $n\in\mathbb{A}_S$, equals the benchmark throughput $\Upsilon_n$, if the SU always selects this channel to sense. When $N>1$, the SU has the option to select one from $N$ channels to sense. Thus, the  PU's normalized throughput on  channel $n$ will be at least $\Upsilon_n$. Hence, the LPUT constraint in (P2) over  each channel is satisfied by  the proposed  policy.

\section{Numerical Results}
In this section, we show  by simulation the SU's  and  PU's  throughput under the  proposed reactive PU model.
We assume an energy detection based spectrum sensor for the SU, where the
background noise and the received PU signal are modeled as independent white
Gaussian processes. Let $M$ be the number of PU signal measurements, and $\eta_n$
be the decision threshold for channel $n\in\mathbb{A}_{S}$.  Let $\kappa_{n,0}^2$ and $\kappa_{n,1}^2$ denote the
power of the noise and received PU's signal on channel $n$, respectively. Under
the Neyman-Pearson (NP) criterion, the PFA and PM in slot $t\in \{1,\ldots,T\}$ are obtained as \cite{SP}:
$$
\delta_n(t)\!=\!\gamma \Big(\frac{M}{2},\frac{\eta_n(t)}{2(\kappa_{n,0}^2\!+\!\kappa_{n,1}^2)}\Big),~ \epsilon_n(t)\!=\!1\!-\!\gamma \Big(\frac{M}{2},\frac{\eta_n(t)}{2\kappa_{n,0}^2}\Big) \label{PFA&PM}
$$
where $\gamma(a,m)=(1/\Gamma(m)) \! \times \! \int_0^a t^{m-1}e^{-t}dt$ is the
incomplete gamma function \cite{Gamma}. The optimal decision threshold $\eta_n^{*}(t)$ in slot $t$ of the
energy detector is chosen  such that $\delta_n (t)=\zeta$, if the   SCCP constraint is adopted,  or $\delta_n (t)=\delta_n^{\star}(t)$, if the  LPUT constraint is adopted. Furthermore, we
set $\kappa_{n,0}^2=0$ dB, $\kappa_{n,1}^2=5$ dB, $\forall n \in \mathbb{A}_{S}$, and $M=30$.

\subsection{Single-Channel Case}
We first study the PU's and the SU's performance in the single-channel case. We set $\alpha_{0}^a=0.1$, $\beta_{0}^a=0.2$, $\alpha_{1}^a=0.9$, and $\beta_{1}^a=0.95$. According to the stationary distribution of the underlying Markov chain, the PU's initial channel access probability in slot $t=1$ is obtained   as  $\frac{1-\beta_{0}^a}{1-\beta_{0}^a+\alpha_{0}^a}=0.889$.
We consider two cases: $\zeta=0.05$ and $\zeta=0.1$.
From (\ref{reference_throughput}), we obtain the PU's benchmark throughput $\Upsilon_a=0.846$ when $\zeta=0.05$, and $\Upsilon_a=0.8$ when $\zeta=0.1$.
In each case, we first show the SU's normalized throughput under the non-reactive PU model and the reactive PU model, both subject to the SCCP constraint. The non-reactive PU model is obtained equivalently by setting $\alpha_{1}^a=\alpha_{0}^a=0.1$ and $\beta_{1}^a=\beta_{0}^a=0.2$ in the reactive PU model. The SU's normalized throughput under the non-reactive PU model is computed based on the SU's optimal OSA policy in \cite{Y.ChenIT08}.
We then focus on the reactive PU model and compare the PU's and the SU's normalized throughput under the SCCP constraint and the LPUT constraint in both cases of $\zeta=0.05$ and $\zeta=0.1$. Specifically, under the SCCP constraint, the SU adopts the optimal sensor operating policy $\pi_{\delta}^{*}$ and the optimal spectrum access policy $\pi_c^{*}$, as shown in Section IV, while under the LPUT constraint, the SU adopts the suboptimal sensor operating policy $\pi_{\delta}^{\star}$ with $\psi(t)=0.8$ and the optimal access policy $\pi_c^{*}$, as shown in Section V.

\begin{figure}
\centering
\DeclareGraphicsExtensions{.eps,.mps,.pdf,.jpg,.png}
\DeclareGraphicsRule{*}{eps}{*}{}
\includegraphics[angle=0, width=0.45\textwidth]{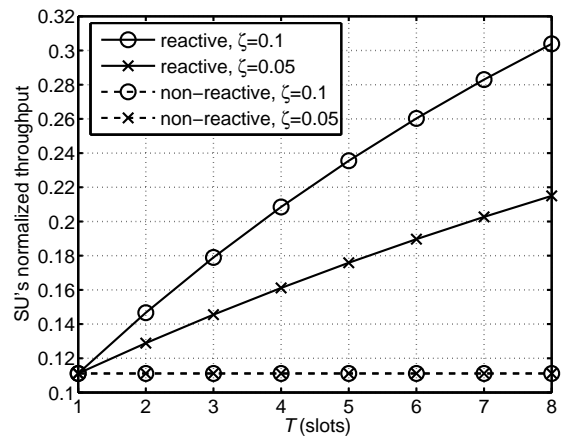} 
\caption{SU's normalized throughput under the SCCP constraint. $N=1$.}
\label{fig:RPU_NRPU_single}
\end{figure}

\begin{figure}
\centering
\DeclareGraphicsExtensions{.eps,.mps,.pdf,.jpg,.png}
\DeclareGraphicsRule{*}{eps}{*}{}
\includegraphics[angle=0, width=0.45\textwidth]{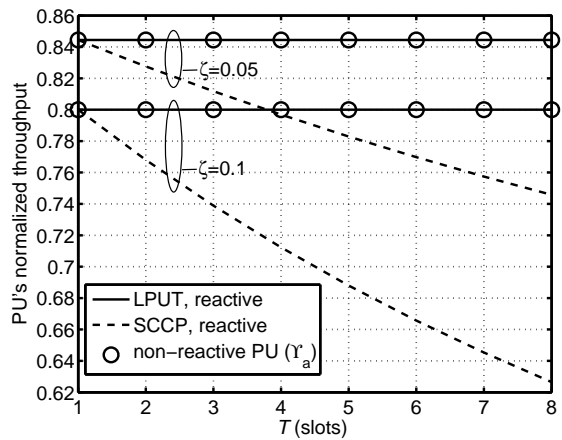}
\caption{PU's normalized throughput. $N=1$.}
\label{fig:PU_single}
\end{figure}

Fig.~\ref{fig:RPU_NRPU_single} shows the SU's normalized throughput under the non-reactive PU model as well as the reactive PU model, by adopting the SCCP constraint.  We observe that the SU achieves higher throughput in the latter than in the former model. This is mainly because that the reactive PU reduces but the non-reactive PU remains the channel access probability after a collision with the SU occurs. Since the non-reactive PU has the same channel access probability, when $N = 1$, the expected channel access opportunities are unchanged over time for the SU. As a result, the SU's throughput is a constant  under the non-reactive PU model. In addition, we observe that the SU's throughput under the non-reactive PU model remains the same in both cases of $\zeta=0.05$ and $\zeta=0.1$. However, the SU's throughput under the reactive PU model is higher in the case of $\zeta=0.1$ than that in the case of $\zeta=0.05$ due to the more relaxed SCCP constraint.

Fig.~\ref{fig:PU_single} compares the PU's normalized throughput with the benchmark throughput. According to Section IV-B, the benchmark throughput, which remains as a constant over time in both cases of $\zeta=0.05$ and $\zeta=0.1$, is actually the non-reactive PU's normalized throughput under the SCCP constraint in the single-channel case. Thus, the non-reactive PU is effectively protected by the SCCP constraint. However, under the reactive PU model, it is observed that if the SU adopts the SCCP constraint, the PU's normalized throughput is lower than the benchmark throughput $\Upsilon_a$ in both cases of $\zeta=0.05$ and $\zeta=0.1$ if $T>1$; and thus the PU is not protected properly, which is in accordance with Proposition \ref{proposition: ineffectiveness of SCCP}. Furthermore, the PU's throughput loss is more substantial in the case of $\zeta=0.1$ than  in the case of $\zeta=0.05$, since the PU allows more collisions  when $\zeta$ is larger. On the other hand, if the SU adopts the LPUT constraint, we
observe that the PU's normalized throughput is equal to the benchmark throughput in both cases, and thus the PU is protected as expected.

\begin{figure}
\centering
\DeclareGraphicsExtensions{.eps,.mps,.pdf,.jpg,.png}
\DeclareGraphicsRule{*}{eps}{*}{}
\includegraphics[angle=0, width=0.45\textwidth]{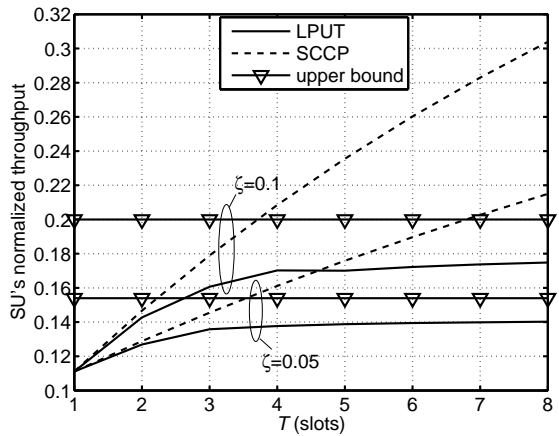}
\caption{SU's normalized throughput under the reactive PU model. $N=1$.}
\label{fig:SU_single}
\end{figure}

Fig.~\ref{fig:SU_single} shows the corresponding SU's normalized throughput under the reactive PU model.
We observe that the SCCP constraint leads to a higher SU throughput than the LPUT constraint in both cases of $\zeta=0.05$ and $\zeta=0.1$. This is because the SU can exploit the PU's reaction to get more throughput under the SCCP constraint.
It is also observed from Fig.~\ref{fig:SU_single} that, unlike the case under the SCCP constraint,  the SU's  throughput under the LPUT constraint does not always increase over $T$. For example, the SU's throughput obtained  with $T=5$ is lower than that with $T=4$.
In addition, by comparing the SU's normalized throughput in both cases, we observe that the SU achieves higher throughput with $\zeta=0.1$ as compared to $\zeta=0.05$. This is consistent with the PU's higher throughput loss when $\zeta=0.1$ as shown in  Fig.~\ref{fig:PU_single}.
To evaluate the performance of the proposed suboptimal OSA policy for the SU, in the following, we propose an \emph{upper bound} of the SU's normalized throughput for the single-channel case, under the constraint that the PU must achieve the benchmark throughput. By noticing the fact that unit throughput is the maximum normalized throughput that a channel
can provide, the upper bound is given by the difference between the unit throughput and the PU's benchmark throughput on the channel. Since the SU's throughput loss due to sensing errors is not considered, the upper bound is higher than the SU's maximum normalized throughput when the PU achieves the benchmark throughput. As shown in  Fig.~\ref{fig:SU_single}, we compare the SU's normalized throughput with the upper bound in both cases of $\zeta=0.05$ and $\zeta=0.1$ under both the SCCP and LPUT constraints. It is observed that the SU's normalized throughput under the suboptimal policy is always lower than the upper bound under the LPUT constraint. However, under the SCCP constraint, after $T=3$ in both cases of $\zeta=0.05$ and $\zeta=0.1$, the SU's normalized throughput becomes higher than the \emph{upper bound}, since the PU's achievable throughput deviates from the benchmark throughput.

\begin{figure}
\centering
\DeclareGraphicsExtensions{.eps,.mps,.pdf,.jpg,.png}
\DeclareGraphicsRule{*}{eps}{*}{}
\includegraphics[angle=0, width=0.45\textwidth]{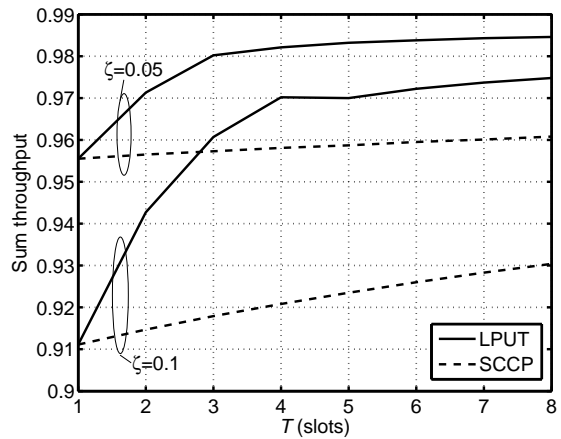}
\caption{Sum throughput under the reactive PU model. $N=1$.}
\label{fig:Sum_single}
\end{figure}

Fig.~\ref{fig:Sum_single} compares the sum of SU's and PU's normalized throughput under the SCCP and LPUT constraints in both cases of $\zeta=0.05$ and $\zeta=0.1$, where the reactive PU model is considered.
It is observed from Fig.~\ref{fig:Sum_single} that,  in both cases, the LPUT constraint leads to a higher sum-throughput than the SCCP constraint. It is worth pointing out that, although the sum-throughputs under the SCCP and LPUT constraints are close, the individual portions of the SU's and PU's throughput, as shown in Fig.~\ref{fig:PU_single} and Fig.~\ref{fig:SU_single}, are very different under these two constraints.

\subsection{Multi-Channel Case}
Next, we consider the multi-channel case by assuming $N=3$. Since the performances of SU's and PU's throughput under the non-reactive PU model in the multi-channel case are similar to those in the single-channel case, respectively, we only consider the reactive PU model in this case. The reactive PU model is given by four vectors  $\pmb{\alpha_0}=(0.1, 0.1, 0.05)$, $\pmb{\beta_0}=(0.1, 0.2, 0.6)$, $\pmb{\alpha_1}=(0.9, 0.9, 0.9)$, and $\pmb{\beta_1}=(0.95, 0.95, 0.95)$. According to the stationary distribution of the underlying Markov chain, the PU's initial channel access probabilities in slot $t=1$ are obtained   as $(0.9,0.889,0.889)$.
Since the results of the SU's and the PU's throughput with $\zeta=0.1$ are similar to those with $\zeta=0.05$, we only show the case of $\zeta=0.05$ under the reactive PU model. From (\ref{reference_throughput}), the  PU's benchmark throughput is obtained as $\pmb{\Upsilon}=(\Upsilon_1,\Upsilon_2,\Upsilon_3)=(0.855,0.846,0.846)$ with $\zeta=0.05$. Similar to the single-channel case, the SU adopts $\pi_{\delta}^{\star}$ with $\psi(t)=0.8$.

\begin{figure}
\centering
\DeclareGraphicsExtensions{.eps,.mps,.pdf,.jpg,.png}
\DeclareGraphicsRule{*}{eps}{*}{}
\includegraphics[angle=0, width=0.45\textwidth]{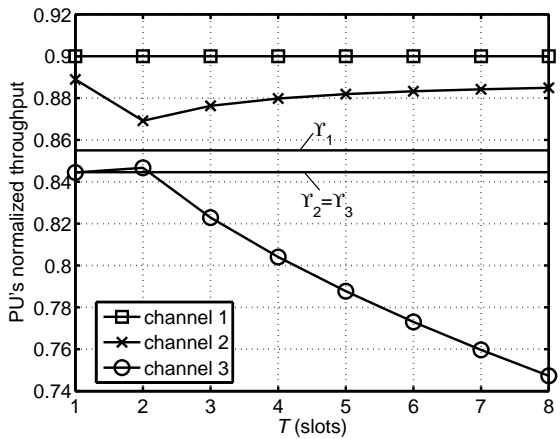}
\caption{PU's normalized throughput under the SCCP constraint. $N=3$.}
\label{fig:PU_3channels_SCCP}
\end{figure}

\begin{figure}
\centering
\DeclareGraphicsExtensions{.eps,.mps,.pdf,.jpg,.png}
\DeclareGraphicsRule{*}{eps}{*}{}
\includegraphics[angle=0, width=0.45\textwidth]{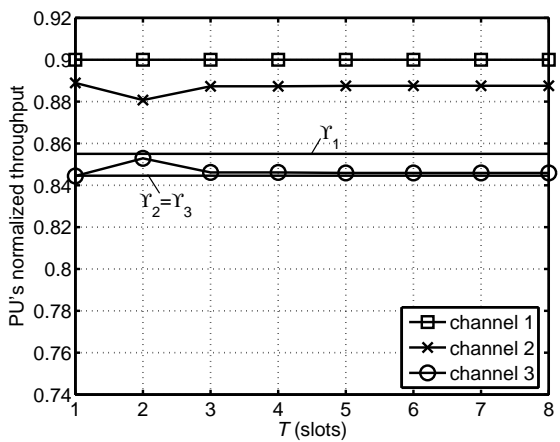}
\caption{PU's normalized throughput under the LPUT constraint. $N=3$.}
\label{fig:PU_3_channels_LPUT}
\end{figure}

Fig.~\ref{fig:PU_3channels_SCCP} shows the PU's normalized throughput under the SCCP constraint. It is observed that the PU's  normalized throughput on channel 1 is higher than the benchmark throughput $\Upsilon_1$  and remains as $0.9$ over $T$, which is the PU's throughput with the absence of SU, while the PU's normalized throughput on channel 2 and channel 3 vary over $T$. This indicates that the SU only selects channel 2 and channel 3 to access in this example, since the SU is able to achieve more reward on these two channels, where the PUs have lower  initial channel access probabilities than on  channel 1.
We also observe that the PU's throughput on channel 2 is always larger than its benchmark throughput $\Upsilon_2$. However, when $T>2$, the PU's throughput on channel 3 decreases over $T$ and becomes lower than the benchmark throughput $\Upsilon_3$. This indicates that the SU selects channel 3 to access in most of the slots. Thus, the PU in  channel~2 is protected as expected, while the PU on channel 3 is not protected properly.
Hence, as we discussed in Section~IV,  the SCCP constraint is in general not able to provide effective protection to all the reactive PUs when $N>1$.

Fig.~\ref{fig:PU_3_channels_LPUT} shows the PU's normalized throughput under the LPUT constraint. Similar to Fig.~\ref{fig:PU_3channels_SCCP} under the SCCP constraint, the PUs on channels 1 and  2 are both properly protected by the LPUT constraint, since their normalized throughput are larger than their respective benchmark throughput $\Upsilon_1$ and $\Upsilon_2$, respectively. Different from  Fig.~\ref{fig:PU_3channels_SCCP}, where the PU's normalized throughput on channel 3 is not guaranteed to meet the benchmark throughput $\Upsilon_3$, we observe from Fig.~\ref{fig:PU_3_channels_LPUT} that the throughput under the LPUT constraint is higher than $\Upsilon_3$, i.e., the PU on channel 3 is protected properly. This shows that the LPUT constraint provides effective protection to all the reactive PUs.

\begin{figure}
\centering
\DeclareGraphicsExtensions{.eps,.mps,.pdf,.jpg,.png}
\DeclareGraphicsRule{*}{eps}{*}{}
\includegraphics[angle=0, width=0.45\textwidth]{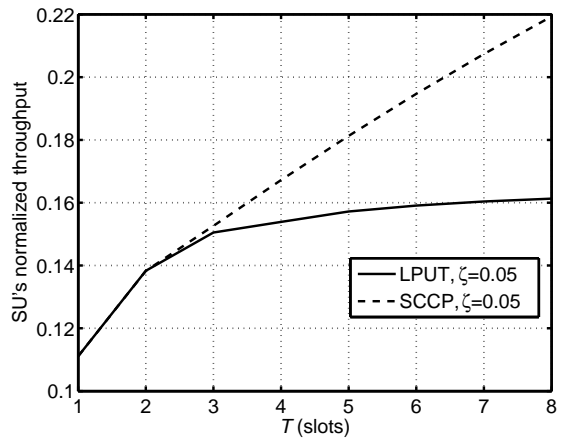}
\caption{SU's normalized throughput under the LPUT constraint. $N=3$.}
\label{fig:SU_3_channels_LPUT}
\end{figure}

\begin{figure}
\centering
\DeclareGraphicsExtensions{.eps,.mps,.pdf,.jpg,.png}
\DeclareGraphicsRule{*}{eps}{*}{}
\includegraphics[angle=0, width=0.45\textwidth]{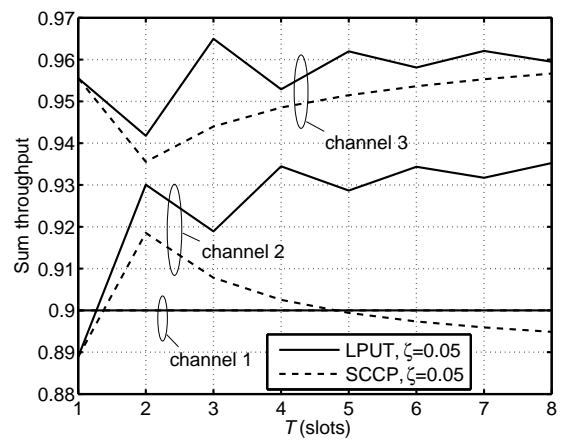}
\caption{Sum throughput under the reactive PU model. $N=3$.}
\label{fig:Sum_multiple}
\end{figure}

Fig.~\ref{fig:SU_3_channels_LPUT} compares the SU's normalized throughput under the SCCP and LPUT constraints. Similar to the single-channel case shown in Fig.~\ref{fig:SU_single}, the SU achieves higher throughput under the SCCP constraint than under the LPUT constraint. Compared with the single-channel case of $\zeta=0.05$ in Fig.~\ref{fig:SU_single}, we observe that the SU achieves higher throughput in the
multi-channel case under both the SCCP and LPUT constraints. This is because when $N>1$, the SU has more flexibility in selecting channels that are more likely to be unoccupied to access.

Fig.~\ref{fig:Sum_multiple} compares the sum of SU's and PU's normalized throughput on each channel under the SCCP and LPUT constraints. It is already shown in Fig.~\ref{fig:PU_3channels_SCCP} (SCCP constraint) and Fig.~\ref{fig:PU_3_channels_LPUT} (LPUT constraint) that the SU only accesses channel 2 or channel 3 and does not access channel 1. Therefore, we observe from Fig.~\ref{fig:Sum_multiple} that the sum-throughput on channel 1 under both constraints remains unchanged over time, which is equal to the PU's normalized throughput on channel 1. It is also observed from Fig.~\ref{fig:Sum_multiple} that, on channel 2 and channel 3, the LPUT constraint leads to a higher   sum-throughput than the SCCP constraint, which is similar to the single-channel case, as shown in Fig.~\ref{fig:Sum_single}.

\section{Conclusion}
In this paper, we studied a practical multi-channel CR network overlaid with reactive PUs. We proposed a new  channel access model for the reactive PU, in which the probability for the PU to access a particular channel is related to the SU's past access decisions. Under this model, we formulated the optimal OSA design for the SU's throughput maximization as a constrained POMDP problem. We considered both SCCP and LPUT constraints to protect the reactive PU's transmission. For the SCCP constraint, we developed the optimal OSA policy via a separation principle. 
For the LPUT constraint, we developed the structure of the optimal OSA policy. In order to reduce the computational complexity, we converted the POMDP into an equivalent MDP with deterministic state transitions. With the reformulated LPUT constraint, we proposed a suboptimal policy of lower complexity. It is shown that the proposed  policy guarantees PU's throughput for both single-channel and multi-channel cases.

\appendices

\section{Proof of Theorem \ref{theorem: separation principle}}\label{appendix:proof 1}
First, we present the following lemma.
\begin{lemma} \label{lemma: in A}
$(\epsilon_{a}^{*}(t), \delta_{a}^{*}(t))$ and $(f_{a}^{*}(0,t),f_{a}^{*}(1,t))$ given in (\ref{Optimalsolution}) are the optimal solutions to the   problem
\begin{align}
&\mathop{\max}_{(\epsilon_{a}(t), \delta_{a}(t)) \in\mathbb{A}_{\delta}(a(t)), \atop (f_{a}(0,t),f_{a}(1,t)) \in [0,1]^{2}}
C_{1} \times \mu_a(t)+C_{2} \times g_a(t)+C_3 \notag \\
&~~~~~~~~~s.t. ~~~~~~~\mu_a(t)\leq \zeta.  \notag
\end{align}
where $C_1$, $C_2$, and $C_3$ are constants with $C_1 \geq 0$, $C_2 \geq 0$,  $g_a(t)$ is given in (\ref{g_a}), and $\mu_a(t)$ is given in (\ref{mu_a}).
\end{lemma}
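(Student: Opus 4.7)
My plan is to first use $C_1\geq 0$ and $\mu_a(t)\leq\zeta$ to bound the objective by $C_1\zeta+C_2 g_a(t)+C_3$, then show that the maximum of $g_a(t)$ subject to $\mu_a(t)\leq\zeta$ (with $(\epsilon_a(t),\delta_a(t))\in\mathbb{A}_{\delta}(a(t))$ and $(f_a(0,t),f_a(1,t))\in[0,1]^2$) equals $1-\epsilon_a^{*}(\zeta)$ and is attained precisely at the tuple in (\ref{Optimalsolution}), at which $\mu_a(t)=\zeta$ holds simultaneously. Once both these inequalities are simultaneously tight at the claimed maximizer, the lemma follows immediately.

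To solve the reduced $g_a$-maximization, I would first fix the sensor operating point $(\epsilon_a,\delta_a)$ and optimize over $(f_a(0),f_a(1))\in[0,1]^2$. Substituting (\ref{g_a}) and (\ref{mu_a}), this is a $2$-variable linear program with nonnegative coefficients, in which the ``efficiency ratio'' (objective coefficient over constraint coefficient) is $(1-\epsilon_a)/\delta_a$ for $f_a(1)$ and $\epsilon_a/(1-\delta_a)$ for $f_a(0)$. On the feasible sensor region $\epsilon_a+\delta_a\leq 1$, the former ratio dominates the latter, so the LP fills $f_a(1)=1$ first and uses $f_a(0)$ only to absorb any residual constraint slack. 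This gives two subcases: if $\delta_a\leq\zeta$ then $f_a(0)=(\zeta-\delta_a)/(1-\delta_a)$, $f_a(1)=1$, yielding $g_a=1-\epsilon_a(1-\zeta)/(1-\delta_a)$; if $\delta_a>\zeta$ then $f_a(0)=0$, $f_a(1)=\zeta/\delta_a$, yielding $g_a=(1-\epsilon_a)\zeta/\delta_a$. In both subcases the constraint is tight at the LP optimum.

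Next, I would take $\epsilon_a=\epsilon_a^{*}(\delta_a)$ on the optimal ROC curve (since $g_a$ is decreasing in $\epsilon_a$ in each subcase) and optimize over the single remaining variable $\delta_a$. This reduces to minimizing $\epsilon_a^{*}(\delta)/(1-\delta)$ over $\delta\in(0,\zeta]$ in the first subcase, and maximizing $(1-\epsilon_a^{*}(\delta))/\delta$ over $\delta\in[\zeta,1)$ in the second. Both monotonicities follow from the concavity of the optimal ROC, viewed as $P_D(\epsilon)=1-\delta_a^{*}(\epsilon)$ concave with $P_D(0)=0$ and $P_D(1)=1$: the first quantity is the slope of the chord from $(0,0)$ to the ROC point, which is non-increasing along a concave curve starting at the origin; for the second, the substitution $u=1-\epsilon$ turns $\delta=1-P_D(1-u)$ into a convex function $g(u)$ with $g(0)=0$, so $g(u)/u=\delta/(1-\epsilon)$ is non-decreasing in $u$, i.e., $(1-\epsilon)/\delta$ is non-increasing in $\delta$. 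Both subcases therefore peak at $\delta_a=\zeta$ with the common value $g_a=1-\epsilon_a^{*}(\zeta)$, achieved by $(f_a(0),f_a(1))=(0,1)$, which matches (\ref{Optimalsolution}).

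The main obstacle will be stating the two chord-slope monotonicities cleanly, since they rely on the concavity of the ROC and it is easy to flip signs in these reparametrizations; the convexity change of variable above is the device I would use to keep the directions straight. Once those are in place, the lemma follows by noting that the upper bounds $C_1\mu_a(t)\leq C_1\zeta$ and $g_a(t)\leq 1-\epsilon_a^{*}(\zeta)$ are attained simultaneously at the tuple in (\ref{Optimalsolution}), which is therefore the global maximizer.
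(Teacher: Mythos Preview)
Your overall architecture is the same as the paper's: bound $C_1\mu_a(t)\le C_1\zeta$, maximize $g_a(t)$ subject to $\mu_a(t)\le\zeta$, and observe that both bounds are tight simultaneously at the tuple in (\ref{Optimalsolution}). The paper's proof differs only in that it does not re-derive the $g_a$-maximization result; it simply invokes \cite{Y.ChenIT08} for the fact that (\ref{Optimalsolution}) solves $\max g_a(t)$ subject to $\mu_a(t)\le\zeta$, and then notes that $\mu_a(t)=\zeta$ there, so with $C_1,C_2\ge 0$ both terms are maximal. Your LP-plus-ROC-concavity argument is a correct self-contained proof of that cited step: the efficiency-ratio comparison $(1-\epsilon_a)/\delta_a\ge \epsilon_a/(1-\delta_a)$ is equivalent to $\epsilon_a+\delta_a\le 1$, which is exactly the feasibility boundary of $\mathbb{A}_\delta(a(t))$, and your two chord-slope monotonicities from concavity of the ROC are the right tools to push $\delta_a$ to $\zeta$ in both subcases. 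So your proposal is correct and more detailed than needed here; the paper's version is shorter only because it outsources this sub-result.
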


\begin{proof}
In \cite{Y.ChenIT08}, the authors proved that (\ref{Optimalsolution}) is the optimal solution to the     problem
\begin{align}
&\mathop{\max}_{(\epsilon_{a}(t), \delta_{a}(t)) \in\mathbb{A}_{\delta}(a(t)), \atop (f_{a}(0,t),f_{a}(1,t)) \in [0,1]^{2}}
g_a(t) \notag \\
&~~~~~~~~~s.t. ~~\mu_a(t)\leq \zeta. \notag
\end{align}
By applying (\ref{Optimalsolution}), the constraint function $\mu_a(t)$ achieves  $\zeta$ with equality. Since  $C_1 \geq 0$, $C_2 \geq 0$, when the solution is given by (\ref{Optimalsolution}), the objective function is maximized. Lemma \ref{lemma: in A} thus follows.
\end{proof}

With Lemma A.1, we now prove the separation principle. By mathematical induction, in the following, we prove that  (\ref{Optimalsolution}) is the SU's optimal action decision for (P1) in each slot $t$, $1\leq t \leq T$, and with (\ref{Optimalsolution}), the value function $V_{t}(\mathbf{\Lambda}(t))$ in slot $t$ is of the following form:
\begin{align}
V_{t}(\mathbf{\Lambda}(t))=& D_t \times \big(\lambda_{a0}(t)+\lambda_{a2}(t)\big) +F_t \times \lambda_{a1}(t) \nonumber \\
+&H_t \times \lambda_{a3}(t) + z_t, ~~1\leq t\leq T \label{V_tForm}
\end{align}
where   $D_t \geq 0$, $H_t\geq F_t \geq 0$, and $z_t$ is independent of $\mathbf{\Lambda}(t)$.

Specifically, in the last slot $t=T$,  suppose that channel $a$ is selected to sense. From (\ref{Q_T}), we then have $Q_{T}(\mathbf{\Lambda}(T)|A)=(\lambda_{a1}(T)+ \lambda_{a3}(T))\times g_a(T)$.
Since $\lambda_{a1}(T)+ \lambda_{a3}(T)\geq 0$, from Lemma A.1, it is easy to verify  that (\ref{Optimalsolution}) is the SU's optimal action decision in slot $T$, such that $Q_{T}(\mathbf{\Lambda}(T)|A(T))$ is maximized subject to  the SCCP constraint.
Applying (\ref{Optimalsolution}) to $Q_{T}(\mathbf{\Lambda}(T)|A)$, we obtain $V_{T}(\mathbf{\Lambda}(T))=(\lambda_{a1}(T)+ \lambda_{a3}(T))\times (1-\epsilon_a^{*}(T))$, which follows the form given in (\ref{V_tForm}).

Now suppose that in slot $t$, channel $a$ is selected to sense, and  (\ref{Optimalsolution}) is the SU's optimal action decision in slot $t$ with value function $V_{t}(\mathbf{\Lambda}(t))$ given by (\ref{V_tForm}). Next, supposing  $a(t-1)=n$  in slot $t-1$, given $\mathbf{\Lambda}(t-1)$, we derive the optimal action decision in slot $t-1$ and the value function $V_{t-1}(\mathbf{\Lambda}(t-1))$ in the following two cases:
\begin{itemize}
\item \emph{Case 1: $n \neq a$.}  According to (\ref{updatedBM2}) and (\ref{Q_t}) and after some algebra, we obtain
$Q_{t-1}(\mathbf{\Lambda}(t-1)|A(t-1))=\big(\epsilon_{n}(t-1)f_{n}(0,t-1)+(1-\epsilon_{n}(t-1))f_{n}(1,t-1)\big)\times \big(\lambda_{n1}(t-1)+\lambda_{n3}(t-1)\big)+V_{t}(\mathbf{\Lambda}(t))$. Since  from (\ref{updatedBM2}), $\mathbf{\Lambda}(t)$ is independent of the SU's action $A(t-1)$, $V_{t}(\mathbf{\Lambda}(t))$ is treated as  a constant.
Then according to Lemma A.1, (\ref{Optimalsolution}) is the SU's optimal action  to maximize $Q_{t-1}(\mathbf{\Lambda}(t-1)|A(t-1))$ subject to the SCCP constraint. Applying (\ref{Optimalsolution}) to  $Q_{t-1}(\mathbf{\Lambda}(t-1)|A(t-1))$ yields that $V_{t-1}(\mathbf{\Lambda}(t-1))=(1-\epsilon_a^{*}(t-1))\times (\lambda_{n1}(t-1)+\lambda_{n3}(t-1))+V_{t}(\mathbf{\Lambda}(t))$. Clearly,  $V_{t-1}(\mathbf{\Lambda}(t-1))$  follows the form given in (\ref{V_tForm}).

\item \emph{Case 2: $n=a$.} Similarly to {\it Case 1}, according to (\ref{updatedBM1}) and (\ref{Q_t}), we can obtain the expression of $Q_{t-1}(\mathbf{\Lambda}(t-1)|A(t-1))$, in which $V_{t}(\mathbf{\Lambda}(t))$ is related to the SU's action $A(t-1)$. By   adopting the same method  as in {\it Case 1}, it is easy to verify that  (\ref{Optimalsolution}) is  the SU's optimal action  and the resultant $V_{t-1}(\mathbf{\Lambda}(t-1))$ follows the form given in (\ref{V_tForm}), 
with  $D_{t-1}=\zeta[H_t\alpha_1^a-D_t\alpha_1^a-F_t\alpha_0^a+D_t\alpha_0^a]+F_t\alpha_0^a+D_t(1-\alpha_0^a)\geq 0$, $F_{t-1}=F_t\beta_1^a+D_t-D_t\beta_1^a+1-\epsilon_a^{*}(t-1)\geq 0$,  $H_{t-1}=H_t\beta_1^a+D_t-D_t\beta_1^a+1-\epsilon_a^{*}(t-1)\geq 0$ and $z_{t-1}=z_t$.  Obviously, $H_{t-1}\geq F_{t-1}$.
\end{itemize}

Hence, (\ref{Optimalsolution}) is the SU's optimal action in each slot under the SCCP constraint.
From (\ref{Optimalsolution}), the optimal sensor operating point and the optimal access probabilities are constant and  independent from the channel selected to sense.
As a result, we can separately design the optimal spectrum sensing policy  as shown in Theorem~\ref{theorem: separation principle}, without loss of  optimality. Theorem \ref{theorem: separation principle} is thus proved.

\section{Proof of Proposition \ref{proposition: ineffectiveness of SCCP}}\label{appendix:proof 2}
For the case of $N=1$ with $n=a$, there is only one PU and SU pair sharing one channel and the SU's polices are reduced to be $\pi_{\delta}$ and $\pi_{c}$.
Given the current belief state $\mathbf{\Lambda}(t)$, we use $G_t(\mathbf{\Lambda}(t)|\pi)$ to denote the PU's  throughput on channel $a$ from slot $t$ to slot $T$ under the SU's policy  $\pi=\{\pi_{\delta}, \pi_{c}\}$. Similar to $Q_t(\mathbf{\Lambda}(t)|A)$ for the SU in (\ref{Q_t}) and (\ref{Q_T}), from (\ref{PU_IMMR_exp}) and (\ref{eq: PU_normal_Throughput}) and with the fact that $\sigma_a(t)=\mu_a(t)$, $t\in[0,1]$, for $N=1$, we have
\begin{align}
&G_t(\mathbf{\Lambda}(t)|\pi)\!=\!(\lambda_{a0}(t)\!+\!\lambda_{a2}(t))  (1\!-\!\mu_a(t)) \nonumber \\
&~~~~~~~~~~~~~~~~+\sum_{i=0}^3 \sum_{k=0}^{1}\lambda_{ai}U_{A}(k|i)G_{t\!+\!1}(\mathbf{\Lambda}(t\!+\!1)|\pi), \nonumber \\
&~~~~~~~~~~~~~~~~~~~~~~~~~~~~~~~~~~~~~~~~~~~~~~~~1\!\leq \!t\! \leq\! T\!-\!1,  \label{G_t}  \\
&G_T(\mathbf{\Lambda}(T)|\pi)\!=\!(\lambda_{a0}(T)\!+\!\lambda_{a2}(T))  (1\!-\!\mu_a(T)), ~~t\!=\! T. \label{G_T}
\end{align}
It is easy to find that $$G_1(\mathbf{\Lambda}(1)|\pi)=E_{\pi}\big\{\sum_{t=1}^{T} R_{P,a}(t)|\mathbf{\Lambda}(1)\big\}$$
which is the PU's  throughput on channel $a$ over all $T$ slots. Thus,  $R_{P,a}^{o}=G_1(\mathbf{\Lambda}(1)|\pi)/T$.

From (\ref{G_t}) and (\ref{G_T}), to find $G_1(\mathbf{\Lambda}(1)|\pi)$, we need to compute $G_t(\mathbf{\Lambda}(t)|\pi)$ for all $t\!\in\! \{1,\ldots,T\}$.
We consider PU's throughput from slot $t$ to slot $T$  under two cases. One is with the SU's optimal policy for (P1) under the reactive PU model. The other is with the SU's optimal policy in \cite{Y.ChenIT08}, which is  under the SCCP constraint and under the non-reactive PU model.
For notational convenience, we denote PU's throughput from slot $t$ to slot $T$ obtained in the former case by $G_t$, and denote that in the latter case by $G_t^{'}$. Note that $G_1^{'}/T=\Upsilon_a$. We take
$G_1^{'}$ as a reference and show that $G_1<G_1^{'}$. Since the proof is similar to that in Appendix A, in the following, we only provide the proof sketch.

Based on mathematical induction and by computing backward in time from (\ref{G_t}) and (\ref{G_T}), it is easy to  find that in  slot $t$, $\forall t\in \{1,\ldots,T\}$ and $T>1$,   $G_t$ is of the following form:
\begin{equation}
 G_t= (\lambda_{a0}(t)+\lambda_{a0}(t))\times q_t+\lambda_{a1}(t)\times w_t+\lambda_{a3}(t)\times m_t, \label{G_uniform}
\end{equation}
where the coefficients $q_t$, $w_t$ and $m_t$ are time-varying and depend on $(\alpha_0^a, \beta_0^a,\alpha_1^a, \beta_1^a)$.
Since the SU's optimal polices under the two cases are the same, given $\mathbf{\Lambda}(1)$, the belief states $\mathbf{\Lambda}(t)$ under two cases in each slot  are also the same. Moreover,  given $\mathbf{\Lambda}(t)$, $G_t^{'}$ has the similar expression as $G_t$ in slot $t$, but with different coefficients, which are denoted by $q_t^{'}$, $w_t^{'}$, and $m_t^{'}$. Note that by reducing $\alpha_1^a$ and $\beta_1^a$ to $\alpha_1^a=\alpha_0^a$ and $\beta_1^a=\beta_0^a$,  the reactive PU model is reduced to the non-reactive counterpart. Thus, by doing so, in each slot, the coefficients of $G_t$, i.e., $q_t$, $w_t$, and $m_t$, are reduced to those of $G_t^{'}$, i.e., $q_t^{'}$, $w_t^{'}$, and $m_t^{'}$, respectively.
Based on mathematical induction, we find that in  slot $t$, $q_t<q_t^{'}$, $w_t\leq w_t^{'}$ and $m_t<m_t^{'}$. Thus, we obtain   $G_t<G_t^{'}$, $\forall t\in \{1,\ldots,T\}$.

Hence, we have  $G_1/T<G_1^{'}/T$, i.e., for (P1) and under the SU's optimal policy, the PU's normalized throughput $R_{P,a}^{o}<\Upsilon_a$. Proposition \ref{proposition: ineffectiveness of SCCP} is thus proved.

\section{Proof of Proposition \ref{proposition: PU_throughput_constraint_with_equality}} \label{appendix:proof 3}
We first study the PU's throughput under the policy $\pi$ and construct  policy $\pi^{'}$ based on $\pi$.
Suppose under the policy $\pi$, the PU's  throughput from slot $t=1$ to slot $t=T-1$ is $E_{\pi}\big\{\sum_{t=1}^{T-1} R_{P,a}(t)|\mathbf{\Lambda}(1)\big\}=\rho_a$, and  the PU's  throughput in the last slot $t=T$ is $u_T$. From (\ref{PU_IMMR_exp}), we have
\begin{equation}
(\lambda_{a0}(T)+\lambda_{a2}(T))\times (1-\mu_a(T))=u_T \label{pi},
\end{equation}
where $\mu_a(T)$ is given in (\ref{mu_a}).
Note that under policy $\pi$, $ E_{\pi}\big\{\sum_{t=1}^{T}\! R_{P,a}(t)|\mathbf{\Lambda}(1)\big\}\!>\!\Upsilon_a \!\times\! T$. Thus, $u_T\!>\!\Upsilon_a \!\times\! T\!-\!\rho_a$.

The policy $\pi^{'}$ is constructed based on $\pi$. We let the decision functions, as described in Section III-C, from slot $t=1$ to slot $t=T-1$ of policy  $\pi^{'}$ be the same as those of policy $\pi$.   We thus have $E_{\pi^{'}}\big\{\sum_{t=1}^{T-1} R_{P,a}(t)|\mathbf{\Lambda}(1)\big\}=\rho_a$. Different from  policy $\pi$, under the policy $\pi^{'}$ in the last slot $t=T$, we let $E_{\pi^{'}}\big\{R_{P,a}(T)|\mathbf{\Lambda}(T)\big\}=\Upsilon_a \times T-\rho_a$  by selecting actions $\delta_a^{'}(T)=1-\frac{\Upsilon_a \times T-\rho_a}{\lambda_{a0}(T)+\lambda_{a2}(T)}$, $\epsilon_a^{'}(T)$ be the one on the optimal ROC curve corresponding to $\delta_a^{'}(T)$, $f_a^{'}(0,T)=0$, and $f_a^{'}(1,T)=1$.
Note that since $u_T>\Upsilon_a \times T-\rho_a$, from (\ref{pi}), it is clear that $0\leq \delta_a^{'}(T)<1$.
That is, we select feasible actions such that $(\lambda_{a0}(T)+\lambda_{a2}(T))\times (1-\mu_a^{'}(T))=\Upsilon_a \times T-\rho_a \label{pi'}$.

Next,  we show that $$E_{\pi}\big\{ \sum_{t=1}^{T}R_S(t)|\mathbf{\Lambda}(1)\big\}< E_{\pi^{'}}\big\{ \sum_{t=1}^{T}R_S(t)|\mathbf{\Lambda}(1)\big\}.$$ Since the decision functions under policies $\pi$ and $\pi^{'}$ are the same from slot $t=1$ to slot $t=T-1$,  the following equation holds for  the SU's expected throughput from slot $t=1$ to slot $t=T-1$:
\begin{equation}
E_{\pi}\big\{ \sum_{t=1}^{T-1}R_S(t)|\mathbf{\Lambda}(1)\big\}= E_{\pi^{'}}\big\{ \sum_{t=1}^{T-1}R_S(t)|\mathbf{\Lambda}(1)\big\}. \label{T-1_pi_pi'}
\end{equation}
Thus, to compare the SU's expected throughput over $T$ slots under policies $\pi$ and $\pi^{'}$, we only need to compare the SU's expected rewards in the last slot $t=T$. It is  easy to find that the belief states $\mathbf{\Lambda}(T)$ in slot $T$ under both polices are the same.
In the following, we  compute an upper bound on the SU's  expected reward in slot $t=T$ under the policy $\pi$, which is denoted as $E_{\pi}^{U}\big\{R_{S}(T)|\mathbf{\Lambda}(T)\big\}$ with $E_{\pi}^{U}\big\{R_{S}(T)|\mathbf{\Lambda}(T)\big\}\geq E_{\pi}\big\{R_{S}(T)|\mathbf{\Lambda}(T)\big\}$, and show that $ E_{\pi^{'}}\big\{R_{S}(T)|\mathbf{\Lambda}(T)\big\} > E_{\pi}^{U}\big\{R_{S}(T)|\mathbf{\Lambda}(T)\big\}$.
Denote the SU's actions in slot $t=T$ that achieve $E_{\pi}^{U}\big\{R_{S}(T)|\mathbf{\Lambda}(T)\big\}$ under the constraint  given in (\ref{pi})  by $\delta_a^{U}(T)$, $\epsilon_a^{U}(T)$, $f_a^{U}(0,T)$, and $f_a^{U}(1,T)$. To find these actions, we need to solve an optimization problem, which is to maximize $(\lambda_{a1}(T)+\lambda_{a3}(T)) \times g_a^{U}(T)$ subject to $\mu_a^{U}(T)=1-\frac{u_T}{\lambda_{a0}(T)+\lambda_{a2}(T)}$. According to 
Lemma A.1, it is easy to find  that $(\delta_a^{U}(T),\epsilon_a^{U}(T) )$ is on the optimal ROC curve with  $\delta_a^{U}(T)\!=\!1\!-\!\frac{u_T}{\lambda_{a0}(T)\!+\!\lambda_{a2}(T)}$,  $f_a^{U}(0,T)\!=\!0$, and $f_a^{U}(1,T)\!=\!1$.
Since  $u_T\!>\!\Upsilon_a \!\times\! T\!-\!\rho_a$, thus $\delta_a^{'}(T)\!>\!\delta_a^{U}(T)$. Correspondingly,  $\epsilon_a^{'}(T)\!<\!\epsilon_a^{U}(T)$.
From (\ref{g_a}), it is obvious that $g_a^{'}(T)\!>\!g_a^{U}(T)$. Thus,  given  $\mathbf{\Lambda}(T)$ and from (\ref{SU_expect_R}), we have $E_{\pi'}\big\{R_{S}(T)|\mathbf{\Lambda}(T)\big\}\!>\!E_{\pi}^{U}\big\{R_{S}(T)|\mathbf{\Lambda}(T)\big\}\!\geq\! E_{\pi}\big\{R_{S}(T)|\mathbf{\Lambda}(T)\big\}$. From (\ref{T-1_pi_pi'}),  we have $E_{\pi^{'}}\big\{ \sum_{t=1}^{T}\!R_S(t)|\mathbf{\Lambda}(1)\big\}\!>\!E_{\pi}\big\{ \sum_{t=1}^{T}\!R_S(t)|\mathbf{\Lambda}(1)\big\}$. Proposition \ref{proposition: PU_throughput_constraint_with_equality} is thus proved.

\section{Proof of Proposition \ref{proposition: optimal structure}} \label{appendix:proof 4}
Suppose that $\pi_{\delta}^{*}$ and $\pi_{c}^{*}$ are the SU's optimal policies for (P2-S-1). Denote  $c^{*}(t)$, $1\leq t \leq T$, as the resultant PU's  throughput in slot $t$ under  $\pi_{\delta}^{*}$ and $\pi_{c}^{*}$.  That is, given $\mathbf{\Lambda}(t)$ in slot $t$, we have $E_{\pi_{\delta}^{*}, \pi_{c}^{*}}\big\{R_{P,a}(t)|\mathbf{\Lambda}(t)\big\}=c^{*}(t)$ and $\sum_{t=1}^{T}c^{*}(t)=\Upsilon_a\times T$.
Thus, if we have found $c^{*}(t)$, $1\leq t \leq T$,  we can adopt the following  short-term protection for PU's transmission for (P2-S-1), without loss of optimality:
\begin{equation}
 E_{\pi_{\delta}, \pi_{c}}\big\{R_{P,a}(t)|\mathbf{\Lambda}(t)\big\}=c^{*}(t), ~\forall t \in \{1,\ldots,T\}.  \label{eq: short_constrain}
\end{equation}
Then from  (\ref{PU_IMMR_exp}) and (\ref{eq: short_constrain}), under the availability assumption of $c^{*}(t)$, (P2-S-1) is  equivalent to
\begin{align}
\mathrm{(\textrm{D-1})}:\mathop{\mathrm{max.}}_{\pi_{\delta},\pi_{c}} &~
E_{\pi_{\delta},\pi_{c}}\big\{ \sum_{t=1}^{T}R_S(t)|\mathbf{\Lambda}(1)\big\}  \nonumber \\
\mathrm{s.t.}  &~\sigma_a(t)=1-\frac{c^{*}(t)}{P\{\emph{I}(a,t)=0\}}, ~\forall t \in \{1,\ldots,T\}.   \nonumber
\end{align}
Since the formulation is similar to  (P1) with $N=1$, from Appendix A,  the optimal solutions for (D-1) are
\begin{equation}
  \left\{
   \begin{array}{l}
   \!\!\delta_a^{*}(t) =1-\frac{c^{*}(t)}{P\{\emph{I}(a,t)=0\}},\\  
  \!\!\epsilon_a^{*}(t) ~\textrm{is on the optimal ROC curve corresponding }\\
  \!\!~~~~~~~\textrm{to} ~ \delta_a^{*}(t), \\
   \!\!f_a^{*}(0,t) = 0,  \\
   \!\!f_a^{*}(1,t) = 1,  
   \end{array}
  \right. \label{eq: appendix_D}
\end{equation}
Since (D-1) is equivalent to (P2-S-1), (\ref{eq: appendix_D}) is also the optimal solutions for (P2-S-1).
Furthermore, with a proof similar to that in Appendix B, it is easy to show that if $\frac{c^{*}(t)}{P\{\emph{I}(a,t)=0\}}$ is a constant over time $t$, the resultant PU's throughput over $T$ slots will be smaller than $\Upsilon_a\times T$, which is contrary to the fact that  $\sum_{t=1}^{T}c^{*}(t)=\Upsilon_a\times T$. Thus, $\frac{c^{*}(t)}{P\{\emph{I}(a,t)=0\}}$ is not a constant over time $t$. Hence,  the optimal PM decision $\delta_a^{*}(t)$ is  time-varying and needs to be adaptively selected based on $\mathbf{\Lambda}(t)$ over time.
Proposition \ref{proposition: optimal structure} is thus proved.

\section{Proof of Proposition \ref{proposition: optimal transformed MDP}} \label{appendix:proof 5}
Firstly, we introduce some new notations for the POMDP.
According to the complexity analysis for (P2-S-2) in Section V-A, given $\mathbf{\Lambda}(1)$  and the SU's POMDP policy $\pi_{\delta}$ for (P2-S-2), $2^{t-1}$ possible belief states could occur with non-zero probability in slot $t$, $\forall t\in\{1,\ldots,T\}$.
We use $\mathbf{\Lambda}^{b}(t)=(\lambda_{a0}^b(t),\lambda_{a1}^b(t),\lambda_{a2}^b(t),\lambda_{a3}^b(t))$, $b\in\{1,\ldots,2^{t-1}\}$ to denote these possible belief states in slot $t$ and use $h^b(t)$ to denote the occurrence probability of the belief state $\mathbf{\Lambda}^{b}(t)$,
where $h^b(t)$ is determined by the SU's action decision history and observation history in the previous $t-1$ slots and $\sum_{b=1}^{2^{t-1}}h^b(t)=1$.
We denote the SU's PM decision on channel $a$ for  belief state $\mathbf{\Lambda}^{b}(t)$ as $\delta_a^b(t)$.
Next, under the MDP policy $\pi_{\delta,\mathcal{M}}^{\star}$ and the POMDP policy $\pi_{\delta}^{\star}$,  we  give the following lemma, based on which, Proposition \ref{proposition: optimal transformed MDP} can be proved.
\begin{lemma} \label{lemma: state relationship}
Given $\mathbf{\Lambda}(1)=\mathbf{\Omega}(1)$,  the relationship between the POMDP belief states and the MDP state in each slot is
\begin{equation}
\omega_{aj}(t)\!=\!\sum_{b=1}^{2^{t-1}} h^b(t) \times\lambda_{aj}^b(t),~\forall j\!\in \!\mathbb{C}_{S},~\forall t\!\in\!\{1,\ldots,T\}. \label{State_Relation}
\end{equation}
\end{lemma}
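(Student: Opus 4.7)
The plan is to establish Lemma E.1 by induction on the time slot $t$, exploiting the fact that the constructed POMDP policy $\pi_{\delta}^{\star}$ uses exactly the same PM action as $\pi_{\delta,\mathcal{M}}^{\star}$, so both the POMDP belief update and the MDP state update are driven by the same underlying state transition probabilities $P_a(j|i,\delta_{a,\mathcal{M}}^{\star}(t))$.

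For the base case $t=1$, only one belief state occurs with probability $h^{1}(1)=1$, and the assumption $\mathbf{\Omega}(1)=\mathbf{\Lambda}(1)$ yields (\ref{State_Relation}) immediately. For the inductive step, I assume (\ref{State_Relation}) at slot $t$ and show it carries over to slot $t+1$. Each POMDP belief state $\mathbf{\Lambda}^{b}(t)$ branches under action $\delta_{a}^{b}(t)=\delta_{a,\mathcal{M}}^{\star}(t)$ into two successors indexed by the observation $k\in\{0,1\}$, each occurring with conditional probability $\sum_{i}\lambda_{ai}^{b}(t)U_{A}(k|i)$. From the Bayes update (\ref{updatedBM1}), the unnormalized mass on state $j$ at slot $t+1$ carried by the pair $(b,k)$ is
\begin{equation}
h^{b}(t)\sum_{i\in\mathbb{C}_{S}}\lambda_{ai}^{b}(t)P_{a}(j|i,A)U_{A}(k|i). \nonumber
\end{equation}

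The key step is then to sum this expression over all $b\in\{1,\ldots,2^{t-1}\}$ and both observations $k\in\{0,1\}$, which yields $\sum_{b',t+1}h^{b'}(t+1)\lambda_{aj}^{b'}(t+1)$. Because $\sum_{k=0}^{1}U_{A}(k|i)=1$ for every $i$, the observation channel washes out and the sum collapses to
\begin{equation}
\sum_{i\in\mathbb{C}_{S}}P_{a}\bigl(j|i,\delta_{a,\mathcal{M}}^{\star}(t)\bigr)\sum_{b=1}^{2^{t-1}}h^{b}(t)\lambda_{ai}^{b}(t). \nonumber
\end{equation}
Applying the induction hypothesis to the inner sum yields $\sum_{i}P_{a}(j|i,\delta_{a,\mathcal{M}}^{\star}(t))\,\omega_{ai}(t)$, which by the deterministic MDP transition (\ref{MDP_TP}) is exactly $\omega_{aj}(t+1)$. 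This establishes (\ref{State_Relation}) at $t+1$ and closes the induction.

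Given Lemma E.1, Proposition 4 follows quickly: multiplying (\ref{State_Relation}) at $j=0$ and $j=2$ by $(1-\delta_{a}^{\star}(t))=(1-\delta_{a,\mathcal{M}}^{\star}(t))$ and summing shows that the MDP per-slot PU reward $R_{P,a}^{\mathcal{M}}(t)$ in (\ref{PU_MDP_ImmR}) equals the expectation, under $\pi_{\delta}^{\star}$, of the POMDP per-slot reward $R_{P,a}(t)$ in (\ref{PU_IMMR_exp}); summing over $t$ and dividing by $T$ turns the MDP constraint (\ref{LPUT_constraint_MDP}) into the POMDP LPUT constraint (\ref{eq: PU_throughput_constraint}). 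The main obstacle is bookkeeping the branching of belief states and verifying that the non-deterministic observation-driven splitting, once aggregated by $h^{b}(t)$, is washed away by $\sum_{k}U_{A}(k|i)=1$; once this is observed, the induction is essentially a single line of algebra hinging on the identity of the underlying transition kernel used by both processes.
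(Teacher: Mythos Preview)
Your proof is correct and follows the same inductive approach as the paper's own proof, which merely sketches the argument (``after some algebra'') where you spell out the key identity $\sum_{k}U_{A}(k|i)=1$ that collapses the observation branching. Your derivation of Proposition~\ref{proposition: optimal transformed MDP} from the lemma also matches the paper's.
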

\begin{proof}
We use mathematical induction to prove this lemma. Since $\mathbf{\Omega}(1)=\mathbf{\Lambda}(1)$,
(\ref{State_Relation}) holds when $t=1$.
Suppose (\ref{State_Relation}) holds in slot $t$, $t>1$, by applying (\ref{kinU}) to compute $h^b(t+1)$ and applying (\ref{updatedBM1})
and (\ref{MDP_TP}) to update the POMDP belief state and the MDP state,
respectively, after some algebra, we find that   (\ref{State_Relation}) still holds in slot $t+1$.
Lemma~\ref{lemma: state relationship} is thus proved.
\end{proof}

We  now  prove Proposition \ref{proposition: optimal transformed MDP}.
By computing over all the possible belief states in slot $t$, the PU's  throughput under the POMDP policy $\pi_{\delta,\mathcal{M}}^{\star}$ in slot $t$ is
\begin{align}
&E_{\pi_{\delta,\mathcal{M}}^{\star}}\big\{R_{P,a}(t)|\mathbf{\Lambda}(1)\big\} \nonumber \\
&= \sum_{b=1}^{2^{t-1}} h^b(t) \times (\lambda_{a0}^b(t)+\lambda_{a2}^b(t))\times (1-\delta_{a,\mathcal{M}}^{\star}(t)). \label{PU_IMMR_b}
\end{align}
Under the MDP policy  $\pi_{\delta,\mathcal{M}}^{\star}$,  suppose $\sum_{t=1}^TR_{P,a}^{\mathcal{M}}(t)/T=\Upsilon_a$ is satisfied. Then under the POMDP policy $\pi_{\delta}^{\star}$,  from Lemma~\ref{lemma: state relationship}, we find that (\ref{PU_IMMR_b}) is equal to (\ref{PU_MDP_ImmR}) under $\pi_{\delta,\mathcal{M}}^{\star}$, i.e.,  $E_{\pi_{\delta}^{\star}}\big\{R_{P,a}(t)|\mathbf{\Lambda}(1)\big\} = R_{P,a}^{\mathcal{M}}(t)$ in each slot $t$. As a result, by summing the PU's  throughput over all $T$ slots, we have  $E_{\pi_{\delta}^{\star}}\big\{\sum_{t=1}^{T} R_{P,a}(t)|\mathbf{\Lambda}(1)\big\}=\Upsilon_a$.
Proposition \ref{proposition: optimal transformed MDP}  thus follows.

\section{Proof of Proposition \ref{proposition: Upper bound}} \label{appendix:proof 6}
The proof is based on  mathematical induction and by computing backward in time.
It is easy to obtain (\ref{upper}) holds in slot $t=T$. Now suppose (\ref{upper})  holds in slot $t+1\leq T$ with  $m_j(t+1)$, $j \in \mathbb{C}_S$, given in (\ref{para_not_last_T}), if $t<T-1$, or  in (\ref{para_last_T}), if $t=T-1$.
Then the following inequality must hold, otherwise,  $\delta_{\mathcal{M}}(t+1)$ can be shown to be negative:
\begin{align*}
 X_P(t\!+\!1)\! \leq  &(\omega_{0}(t\!+\!1)\!+\!\omega_{2}(t\!+\!1))\!\times\! m_1(t\!+\!1) \\
 &\!+\!\omega_{1}(t\!+\!1)\!\times\! m_2(t\!+\!1)\!+\!\omega_{3}(t\!+\!1)\!\times\! m_3(t\!+\!1).
\end{align*}
By substituting (\ref{MDP_TP}) and (\ref{eq: X_P_t}) into the above inequality,
after some algebra, we find that (\ref{upper}) still holds in slot $t$ with $m_j(t)$, $j\in \mathbb{C}_S$,
given  in (\ref{para_not_last_T}).
Proposition \ref{proposition: Upper bound} is thus  proved.

\end{document}